\setlist{nolistsep}
\renewcommand{\algorithmicrequire}{\textbf{\small Input:}}
\renewcommand{\algorithmicensure}{\textbf{\small Output:}}
\def\X{***ATT***}
\renewcommand{\vec}[1]{\mathbf{#1}}
\renewcommand{\comment}[1]{}
\def\I{{\mathbb I}}
\def\E{{\mathbb E}}
\def\V{{\mathbb V}}
\def\X{{\vec{X}}}
\def\x{{\vec{x}}}
\def\t{{\vec{t}}}
\def\Z{{\vec{Z}}}
\def\z{{\vec{z}}}
\def\y{{\vec{y}}}
\def\Re{{\mathbb R}}
\newtheorem{thm}{Theorem}
\newtheorem{Cor}{Corollary}
\newtheorem{Lemma}{Lemma}
\newtheorem{Assumption}{Assumption}
\xpatchcmd{\algorithmic}{\setcounter}{\algorithmicfont\setcounter}{}{}
\providecommand{\algorithmicfont}{}
\providecommand{\setalgorithmicfont}[1]{\renewcommand{\algorithmicfont}{#1}}
\theoremstyle{definition}
\providecommand{\keywords}[1]{\textbf{\textbf{Key Words:}} #1}
\begin{document}

\setalgorithmicfont{\small}

\title{Converting High-Dimensional Regression to High-Dimensional Conditional Density Estimation}
\date{}
\author{Rafael Izbicki\thanks{Department of Statistics, Federal University of São Carlos, Brazil.} \ and Ann B. Lee\thanks{Department of Statistics, Carnegie Mellon University, USA.}}

\maketitle 

\begin{abstract} 

 There is a growing demand for nonparametric conditional density estimators (CDEs) in fields such as astronomy and economics.  In astronomy, for example, one can dramatically improve estimates of the parameters that dictate the  evolution of the Universe by working with full conditional densities instead of regression (i.e., conditional mean) estimates. More generally, standard regression falls short in any prediction problem where the distribution of the response is more complex with multi-modality, asymmetry or heteroscedastic noise. Nevertheless, much of the work on high-dimensional inference concerns regression and classification only, whereas research on density estimation has lagged behind. Here we propose {\em FlexCode}, a fully nonparametric approach to conditional density estimation that reformulates CDE as a non-parametric orthogonal series problem where the expansion coefficients are estimated by regression. By taking such an approach, one can efficiently estimate conditional densities and not just expectations in high dimensions by drawing upon the success in high-dimensional regression. Depending on the choice of regression procedure, our method can adapt to a variety of challenging high-dimensional settings with different structures in the data (e.g., a large number of irrelevant components and nonlinear manifold structure) as well as different data types (e.g., functional data, mixed data types and sample sets). We study the theoretical and empirical performance of our proposed method, and we compare our approach with traditional conditional density estimators on simulated as well as real-world data, such as photometric galaxy data, Twitter data, and  line-of-sight velocities in a galaxy cluster.
 

\keywords{nonparametric inference; conditional density; high-dimensional data; prediction intervals; functional conditional density estimation}
\end{abstract}

  \section{Introduction}
\label{intro}


A challenging problem in modern statistical inference is how to estimate
a conditional density of a random variable $Z \in \Re$ 
given a high-dimensional random vector $\vec{X} \in \Re^D$, $f(z|\x)$. 
This quantity plays a key role in several statistical problems in the sciences where
the regression function $\E[Z|\x]$ is not informative enough due to multi-modality and asymmetry of the conditional density. 

For example, several recent works in 
cosmology \citep{Sheldon,kind2013tpz,Rau01102015} have shown that one can significantly reduce systematic errors in cosmological analyses by using the full probability distribution of photometric redshifts $z$ (a key quantity that relates the distance of a galaxy to the observer) 
given galaxy colors $\vec{x}$ (i.e., differences of brightness measures at two different wavelengths). Other fields where conditional density estimation plays a key role are time series forecasting in economics \citep{kalda2013nonparametric} and   approximate Bayesian methods
\citep{fan2013approximate,izbickiLeeSchafer,papamakarios2016fast}.
 Conditional densities can also be used to 
construct accurate predictive intervals for new observations in settings with complicated sources of errors \citep{Fernandez-Soto} or multimodal distributions (see Fig.~\ref{fig::Asymmetric} and Fig.~\ref{fig::clusterCoverage} for examples).

\begin{figure}[H]
	\centering
	\subfloat{  \includegraphics[page=2,scale=0.29]{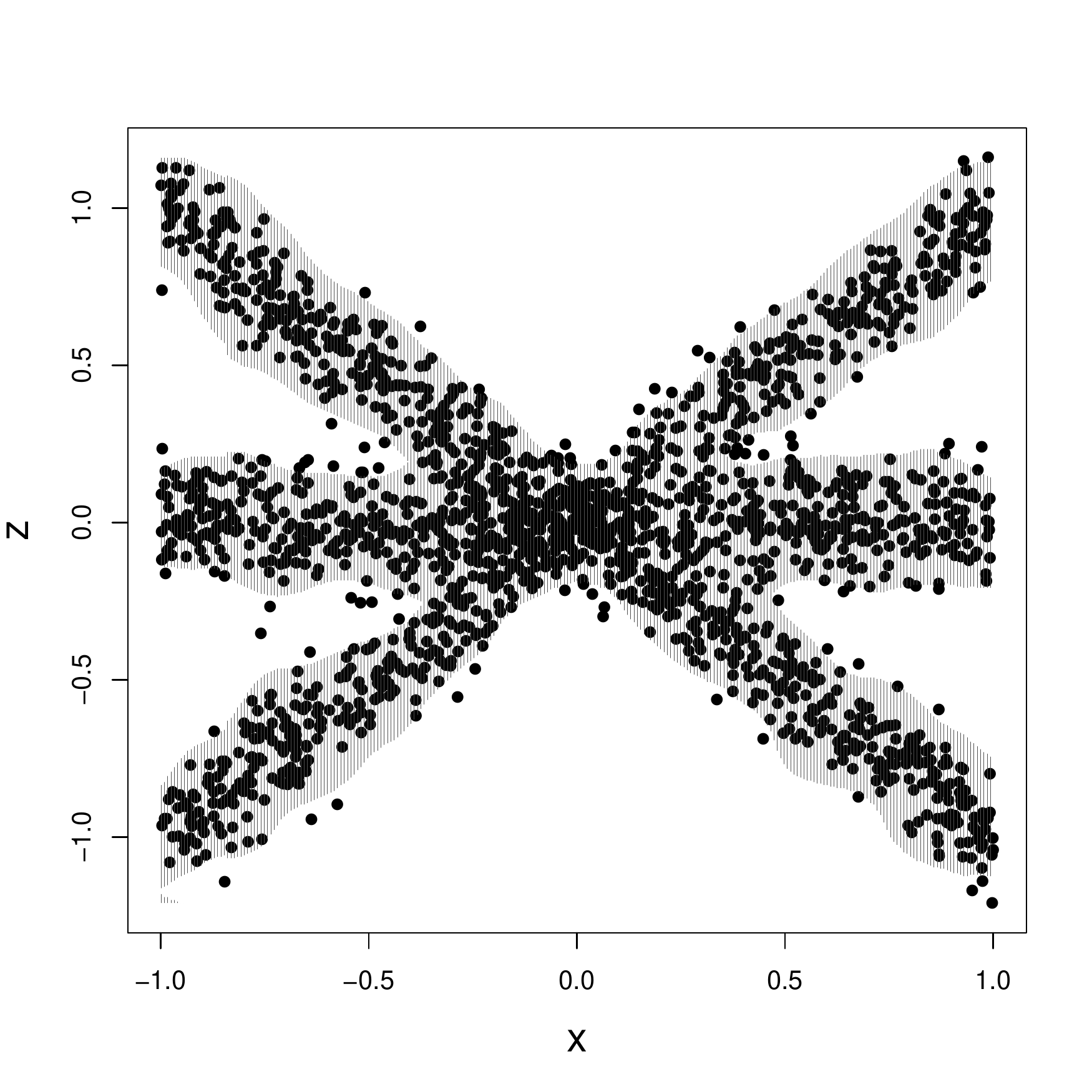}} 
	\subfloat{  \includegraphics[page=1,scale=0.29]{predictionBaunds1DBimodal.pdf}} 
	\vspace{-3mm}
	\caption{\footnotesize  A toy example where nonparametric regression methods fail to capture the underlying structure and return too wide prediction bands, whereas a nonparametric conditional density estimator automatically returns informative predictive bands. The left plot shows  95\%  predictions bands from local linear regression, and the right plot shows 95\% highest predictive density (HPD) bands derived from \emph{FlexCode-SAM} estimates of the conditional density.} 
	\label{fig::Asymmetric}
\end{figure}

Nevertheless,  whereas  a large literature has been devoted to estimating 
the regression $\E[Z|\x]$, statisticians have paid far less attention to estimating the full conditional density $f(z|\x)$, especially when $\x$ is high-dimensional.
Most attempts to estimate $f(z|\x)$ can effectively only handle up to about 3 covariates (see, e.g., \citealt{Fan2}). In higher dimensions, such methods typically rely on a prior dimension reduction step 
 which, as is the case with any data reduction, can result in significant loss of information. 
\vspace{2mm}

\textbf{Contribution.} There is currently no {\em general} procedure for converting successful regression estimators (that is, estimators of the conditional mean $\E[Z|\x]$) to estimators of the full conditional density $f(z|\x)$ --- indeed, this is a non-trivial problem. In this paper, we propose a fully nonparametric approach to conditional density estimation, which reformulates CDE as an orthogonal series problem where the expansion coefficients are estimated by regression. By taking such an approach, one can efficiently estimate conditional densities in high dimensions by drawing upon the success in high-dimensional regression. Depending on the choice of regression procedure, our method can exploit different types of sparse structure in the data, 
as well as handle different types of data.

For example, in a setting with submanifold structure, our estimator adapts to the intrinsic dimensionality of the data with a suitably chosen regression method;  such as, nearest neighbors, local linear, tree-based or spectral series regression \citep{Bickel:Li:2007,kpotufe2011k,kpotufe2012tree,LeeIzbickiReg}.
Similarly, if 
the number of relevant covariates (i.e., covariates that  affect the distribution of $Z$) is small, one can construct a good conditional density estimator using 
 lasso, SAM, Rodeo or other additive-based regression estimators \citep{tibshirani1996regression,lafferty2008rodeo,meier2009high,yang2015minimax}. 
Because of the flexibility of our approach, the method is able to overcome the  the curse of dimensionality in a variety of scenarios with faster convergence rates and better performance than traditional conditional
density estimators; see Sections~\ref{sec::appli}-\ref{sec::theory} for specific examples and analysis. 
By choosing appropriate regression methods, the method can also handle
different types of covariates that represent discrete data,
mixed data types, functional data, circular data, and so on, which generally require
hand-tailored 
techniques (e.g., \citealt{di2016note}).  Most notably, Sec.~\ref{sec::distributional} describes an entirely new area of conditional density estimation (here referred to as ``Distribution CDE'') where a predictor is an entire {\em sample set} from an underlying distribution.

 We call our general approach {\em FlexCode}, which stands for {\em Flex}ible nonparametric {\em co}nditional {\em d}ensity {\em e}stimation via regression.
\vspace{2mm}

{\bf Existing Methodology.}  With regards to existing methods for estimating $f(z|\vec{x})$, several nonparametric estimators have been proposed   
 when $\vec{x}$  lies in  a {\em low-dimensional} space.
 Many of these methods are based on first estimating $f(z,\vec{x})$ and $f(\vec{x})$ with, for example, kernel density estimators \citep{Rosenblatt}, and then combining the estimates according to $f(z|\vec{x})=\frac{f(z,\vec{x})}{f(\vec{x})}$.
Several works further improve upon such an approach by using different criteria and shortcuts to tune parameters as well as creating fast shortcuts to 
implement these methods (e.g.,
\citealt{Hyndman,Ichimura}).
Other approaches to conditional density estimation in low dimensions
include using locally polynomial regression \citep{FanYaoTon96},  least squares approaches \citep{sugiyama2010conditional} and density estimation through quantile estimation \citep{Takeuchi2}; see \citet{bertin2016adaptive}
and references therein for other methods. 

For moderate dimensions, \citet{Hall2} propose a method for tuning parameters in kernel density 
estimators which automatically determines which components of $\vec{x}$ are 
relevant to $f(z|\vec{x})$. The method produces good results  
but is not practical for high-dimensional data sets: Because it relies on choosing a different bandwidth for each covariate, it has 
a high computational cost that increases with both the sample size $n$ and the dimension $D$, with prohibitive costs even for moderate $n$'s and $D$'s. 
Similarly, 
\citet{shiga2015direct} propose a conditional estimator
that selects relevant components but under the restrictive assumption that $f(z|\x)$ has an additive structure; moreover
the method scales as $O(D^3)$, 
which is also computationally prohibitive for moderate dimensions.
Another framework is developed by
\citet{Efromovich3}, who proposes an orthogonal series estimator that automatically performs dimension reduction on $\vec{x}$
when several components of this vector are conditionally independent of the response.
Unfortunately, the method requires one to compute $D+1$ tensor products, which quickly becomes
computationally intractable even for as few as 10 covariates. 
More recently, \citet{IzbickiLeeCDE} propose an alternative orthogonal series estimator that uses a basis that adapts to the geometry of the data. They show that their approach, called Spectral Series CDE,  as well as the k-nearest neighbor method by \citet{Lincheng},  work
well in high dimensions when there is submanifold structure. These methods, however,
do not perform well when $\x$ has irrelevant components.

FlexCode, on the other hand, is flexible enough to overcome the difficulties of other methods under
a large variety of situations because it makes use of the many existing regression methods for high-dimensional inference. As an example, Fig.~\ref{fig::twitter} shows the level sets of the estimated conditional
density in a 
challenging problem that involves $\approx$500 covariates.
Here we estimate $f(\z|\x)$,
where $\x$ is the content of a tweet and
$\z$ is the location where it was posted (latitude and longitude).\comment{
While FlexCode is able to estimate
the location of tweets even in ambiguous cases
(there is a Long Beach both in California and in Connecticut,
which is reflected by the results in the 
bottom right plot in Fig.~\ref{fig::twitter}),
no other existing fully nonparametric  methods we are aware of are able
to estimate this quantity with reasonable precision.}
FlexCode, based on sparse additive regression, is able to estimate
the location of tweets even in ambiguous cases
(there is a Long Beach both in California and in Connecticut,
which is reflected by the results in the 
bottom right plot in Fig.~\ref{fig::twitter}); we are not aware of any other 
 existing fully nonparametric  method that are able
to estimate this quantity with reasonable precision as well as attach meaningful measures of uncertainty. See more details about this example in Sec.~\ref{sec::twitter}.

\begin{figure}[H]
	\centering
	\subfloat{  \includegraphics[page=1,scale=0.41]{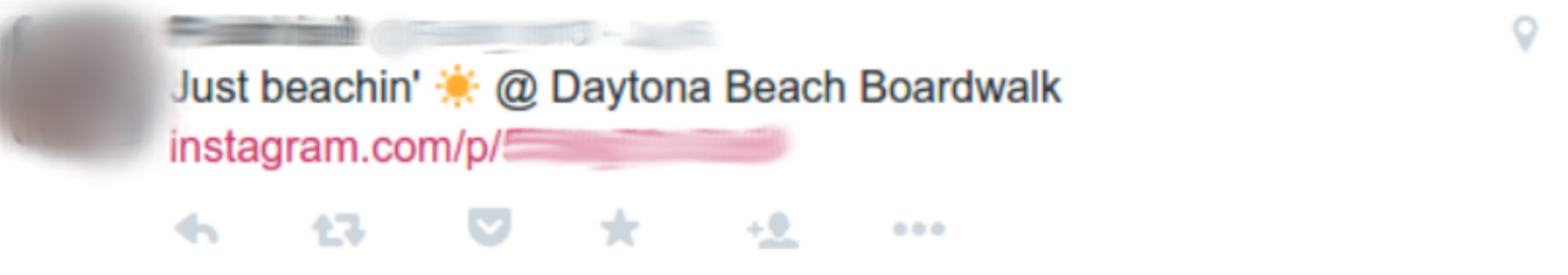}}
	\subfloat{  \includegraphics[page=1,scale=0.41]{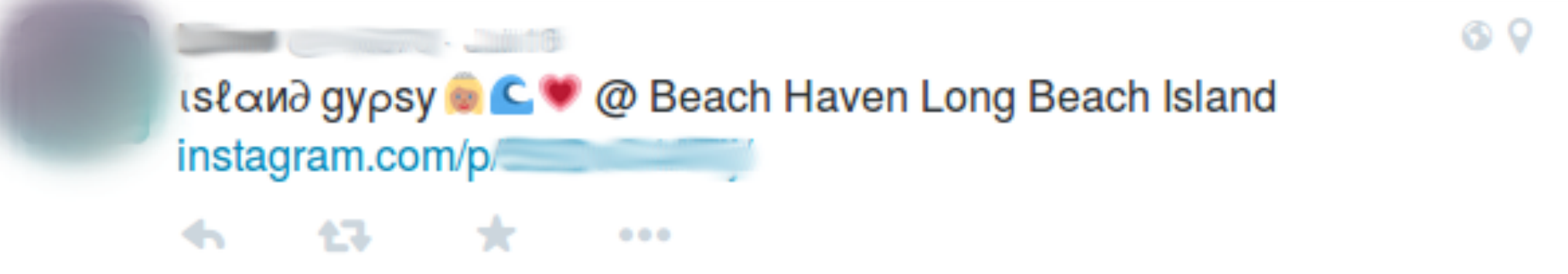}} \\
	\subfloat{  \includegraphics[page=3,scale=0.4]{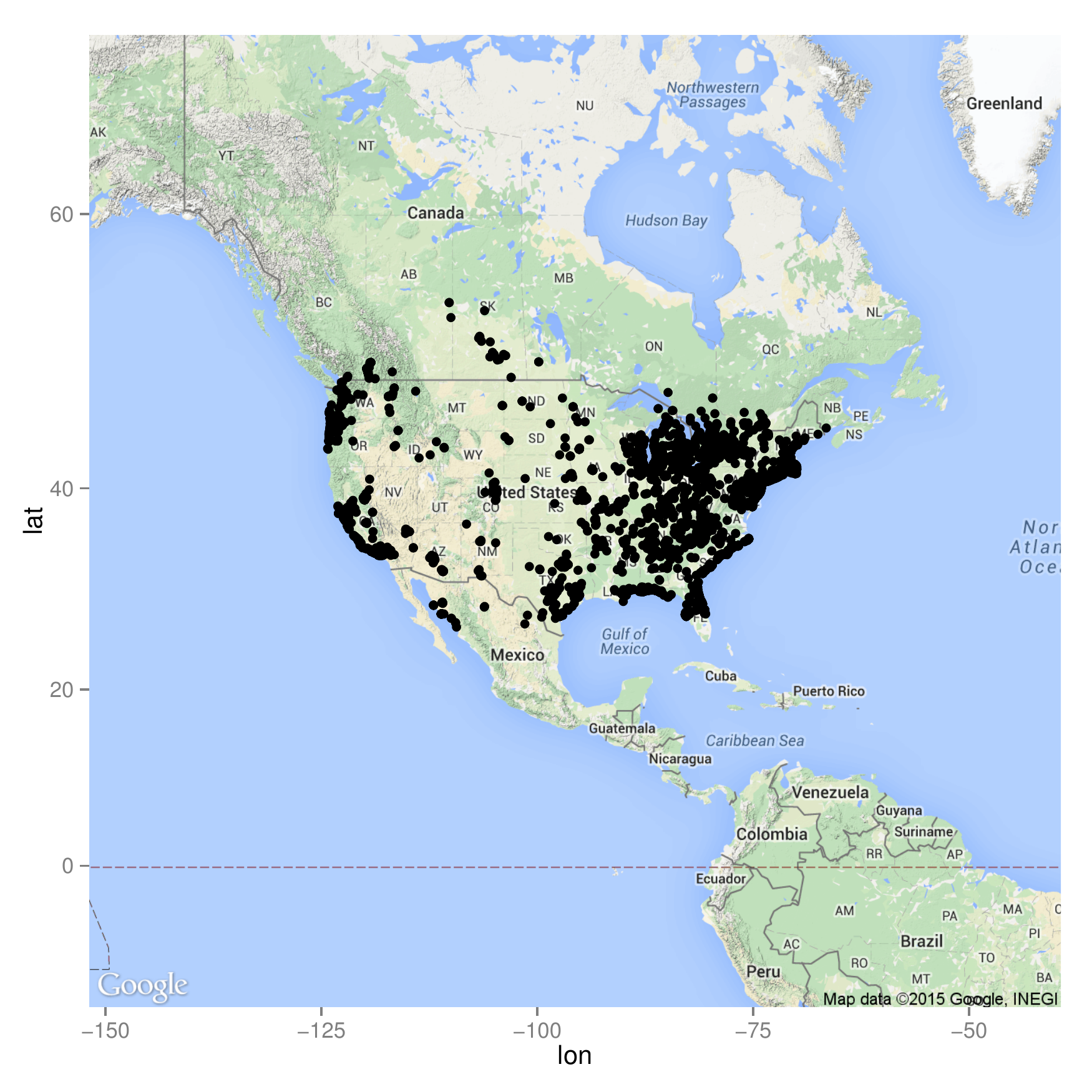}}\hspace{10mm} 
	\subfloat{  \includegraphics[page=1,scale=0.4]{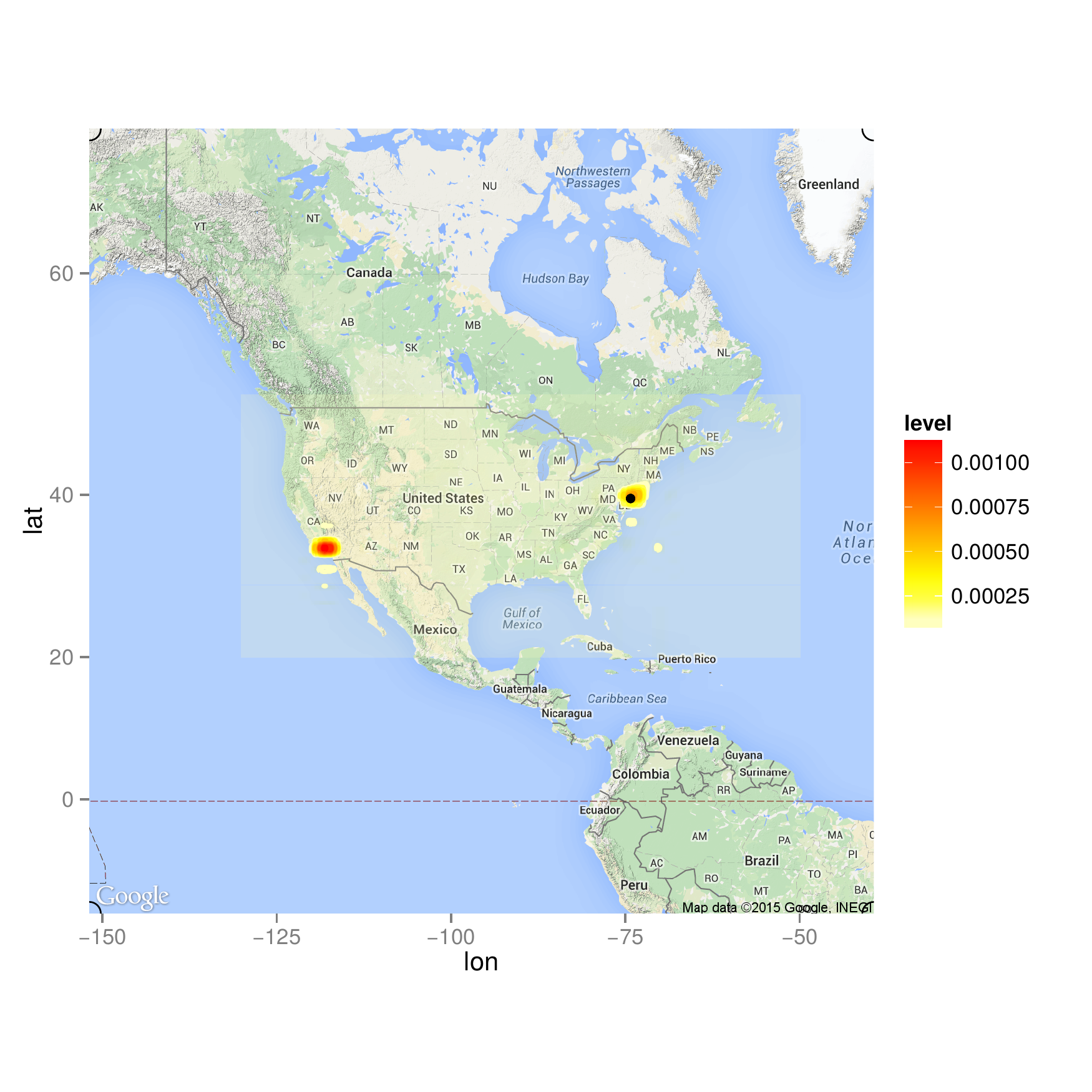}} 
	\vspace{-3mm}
	\caption{\footnotesize  Top: Two tweets with the keyword ``beach''. Bottom: Level sets 
	of the estimated probability density of the tweet locations given the content of the tweets. The black dots indicate their true locations. See Sec.~\ref{sec::twitter} for details.}
	\label{fig::twitter}
\end{figure}

In Section \ref{sec::method}, we describe our method in detail, and  present connections with existing 
literature on Varying Coefficient methods and  Spectral Series CDE. Section \ref{sec::appli} presents several applications of FlexCode. Section \ref{sec::theory} 
discusses convergence rates of the estimator, and Section \ref{sec::concl} concludes the paper.
 
\section{Methods} 
 \label{sec::method}

Assume we observe i.i.d.~data $(\X_1,Z_1),\ldots,(\X_n,Z_n)$, where the covariates $\x \in \mathbb{R}^D$ with $D$ potentially large, and the response $Z \in [0,1]$.~\footnote{More generally, $\x$ can represent functional data, distributions, as well as mixed continuous and discrete data; see Sec.~\ref{sec::appli} for examples. The response $z$ can also be multivariate (Sec.~\ref{sec::extension}) or discrete \citep[Sec.~4.2]{IzbickiLeeCDE}.} Our goal is to estimate the full density $f(z|\x)$ rather than, e.g., only the conditional mean $\E[Z|\x]$ and conditional variance $\V[Z|\x]$. We propose a novel ``varying coefficient'' series approach, where we start by 
 specifying an orthonormal basis $(\phi_i)_{i \in \mathbb{N}}$  for $\mathcal{L}^2(\Re)$.
 This basis will be used to model the density $f(z|\x)$ {\em as a function of $z$}.   As we shall see, each coefficient in the expansion can be directly estimated via a regression. 
  Note that there is a wide range of (orthogonal) bases one can choose from to capture any challenging shape of the density function of interest 
  \citep{mallat1999wavelet}.
For instance, a natural choice for reasonably smooth functions $f(z|\x)$ is the Fourier basis:
\[
    \phi_1(z)=1;\hspace{9mm} \phi_{2i+1}(z)=\sqrt{2}\sin{\left(2\pi iz \right)}, \ i\in \mathbb{N}; \hspace{9mm} \phi_{2i}(z)=\sqrt{2}\cos{\left(2\pi iz \right)}, \ i\in \mathbb{N}
\] 
Alternatively, one can use wavelets or related bases to capture inhomogeneities in the density (see Sec.\ref{sec::distributional} for an example), and indicator functions to model discrete responses~\citep[Sec.~4.2]{IzbickiLeeCDE}.

Smoothing using orthogonal functions is per se not a new concept \citep{efromovich,wasserman}. The novelty in FlexCode is that we, by using an orthogonal series approach for the response variable, can convert a challenging high-dimensional conditional density estimation problem to a simpler high-dimensional regression (point estimation) problem.

For fixed $\x \in \Re^D$,
we write
\begin{equation}
\label{eq::FlexCode}
f(z|\x)=\sum_{i \in \mathbb{N}}\beta_{i }(\x)\phi_i(z).
\end{equation}
Note that our model is fully nonparametric: Equation \ref{eq::FlexCode} hold
as long as, for every $\x$, $f(z|\x)$ is $\mathcal{L}^2(\Re)$ integrable as a function of $z$. Furthermore, 
 because the $\{\phi_i\}_{i \in \mathbb{N}}$  basis functions are orthogonal to each other, the expansion coefficients are given by
\begin{align}
\label{eq::coeff}
\beta_{i }(\x) =\langle f(.|\x),\phi_i\rangle = \int_{\Re} \phi_i(z) f(z|\x) dz = \E\left[\phi_i(Z)|\x\right].
\end{align}
That is, each ``varying coefficient'' $\beta_{i }(\x)$ in Eq.~\ref{eq::FlexCode} is a {\em regression function}, or conditional expectation. 
This suggests that we, for fixed $i$, estimate $\beta_{i }(\x)$ by regressing
$\phi_i(z)$ on $\x$ using the sample $(\X_1,\phi_i(Z_1)),\ldots,(\X_n,\phi_i(Z_n))$. 

We define our FlexCode estimator of $f(z|\x)$  as
\begin{align}
\label{eq:cdeEst}
\widehat{f}(z|\x)=\sum_{i=1}^I\widehat{\beta}_{i}(\x)\phi_i(z),
\end{align}
where the results from the regression, $$\widehat{\beta}_{i }(\x) = \widehat{\E}\left[\phi_i(Z)|\x\right],$$ model how the density varies in covariate space. The cutoff $I$ in the series expansion is a tuning parameter that controls the bias-variance tradeoff in the final density estimate. Generally speaking, the smoother the density, the smaller the value of $I$; see Sec.~\ref{sec::theory} Theory for details. In practice, we use cross-validation or data splitting (Sec.~\ref{lossTuning}) to tune parameters. 

With FlexCode, the problem of high-dimensional conditional density estimation boils down to choosing appropriate methods for estimating the  regression functions $\E \left[\phi_i(Z)|\x\right],$ $i=1,\ldots,I$. The key advantage of FlexCode is its flexibility: By taking advantage of new and existing regression methods, 
 we can adapt to {\em different structures} in the data (e.g., manifolds, irrelevant covariates as well as different relationships between $\x$ and the response $Z$), and we can handle {\em different types of data} (e.g. mixed data, functional data, and so on). 
We will further explore this topic in Secs.~\ref{sec::appli}-\ref{sec::theory}.


\subsection{Loss Function and Tuning of Parameters}
\label{lossTuning}

For a given estimator $\widehat{f}(z|\vec{x})$, we measure the discrepancy between  $\widehat{f}(z|\vec{x})$ and $f(z|\vec{x})$ via the loss function 
\begin{align}
\label{loss} \nonumber 
 L(\widehat{f},f) &= \iint \left(\widehat{f}(z|\vec{x})-f(z|\vec{x})\right)^2dP(\vec{x})dz \\ 
 &=\iint \widehat{f}^2(z|\vec{x})dP(\vec{x})dz-2\iint \widehat{f}(z|\vec{x})f(z,\vec{x}) d\vec{x}dz+C, 
\end{align}
where $C$ is a constant that does not depend on the estimator. 

To tune the parameter $I$, we split the data into a training and a validation set.  
We use the training set to estimate each regression function $\beta_{i}(\x)$. We then use the validation set $(z'_1,\vec{x}'_1),\ldots, (z'_{n'},\vec{x}'_{n'})$ to 
estimate the loss (\ref{loss}) (up to the constant $C$) according to:
\begin{align}
\label{lossEmpirical}
\widehat{L}(\widehat{f},f)=\sum_{i=1}^I \frac{1}{n'} \sum_{k=1}^{n'}\widehat{\beta}^2_{i}(\x_k') -2\frac{1}{n'}\sum_{k=1}^{n'} \widehat{f}(z'_k|\vec{x}'_k),
\end{align}
This estimator is consistent because of the orthogonality of the basis 
$\{\phi_i\}_i$.
We choose the tuning parameters 
with the smallest estimated loss $\widehat{L}(\widehat{f},f)$.
 Algorithm 1 summarizes our procedure.  In line 3, 
we split the training data in two parts to tune the parameters associated with the regression using the standard $\mathcal{L}^2(\Re)$ regression loss, i.e., $\E[(W-\widehat{\beta}_{i }(\X))^2]$.
 
   \begin{algorithm}
  \caption{ \small FlexCode }\label{algorithmCondReg}
  \algorithmicrequire \ {\small Training data; validation data; maximum cutoff $I_0$; orthonormal basis $\{\phi_i\}_i$; regression method and grid of tuning parameters for regression. } 
  
  \algorithmicensure \ {\small Estimator $\widehat{f}(z|\vec{x})$}
  \begin{algorithmic}[1]
     \ForAll{$i \leq I_0$}
     \State Compute $\mathcal{D}=(\X_1,W_1),\ldots,(\X_n,W_n)$, where $W_k:=\phi_i(Z_k)$ 
        \State Estimate the regression $\beta_{i }(\x)=\E\left[W|\x\right]$ using $\mathcal{D}$. 
     \EndFor
        \ForAll{$I \leq I_0$}
  	\State Calculate the estimated loss $\widehat{L}(\widehat{f}_{I},f)$ on the validation set \Comment{Eq.~(\ref{lossEmpirical})} 
  	\State \Comment{$\widehat{f}_{I}$ is the estimator in Eq.~(\ref{eq:cdeEst})}
        \EndFor
      \State Define $\widehat{f}(z|\x)=\arg \min_{\widehat{f}_{I}} \widehat{L}(\widehat{f}_{I},f)$
     \State \textbf{return} $\widehat{f}(z|\vec{x})$
  
  \end{algorithmic}
  \end{algorithm}


 In terms of computational efficiency, FlexCode is typically faster than existing methods for conditional density estimation (see Section~\ref{sec::appli}), especially in high dimensions and for massive data sets. If the FlexCode estimator is based on a scalable regression procedure  (e.g., \citealt{raykar2007scalable,desai2010gear,zhang2013divide,dai2014scalable}), then the resulting conditional density estimator will scale as well. Furthermore, FlexCode is naturally suited for parallel computing, as one can estimate each of the $I$ regression functions 
   separately and then combine the estimates according to Eq.~(\ref{eq:cdeEst}). Our implementation of FlexCode
   is available at \url{https://github.com/rizbicki/FlexCoDE}, and implements a parallel version of the estimator. For the final density estimate (Step 9 in Algorithm 1), we apply the same techniques as in \citet[Section 2.2]{IzbickiLeeCDE}   to remove potentially negative values and spurious bumps.

  
 \subsection{Extension to Vector-Valued Responses}
 \label{sec::extension}

 By tensor products, one can directly extend FlexCode to cases where the response variable $\Z$ is vector-valued.
 For instance,
 if $\Z \in \Re^2$, consider the basis
 $$\left\{\phi_{i,j}(\vec{z})=\phi_{i}(z_1)\phi_{j}(z_2): i,j\in \mathbb{N}\right\},$$
 where $\z=(z_1,z_2)$, and $\left\{\phi_{i}(z_1)\right\}_{i}$ and $\left\{\phi_{j}(z_2)\right\}_{j}$ are bases for functions in $\mathfrak{L}^2(\Re)$.  Then, let
  $$f(\z|\x)=\sum_{i,j \in \mathbb{N}}\beta_{i,j}(\x)\phi_{i,j}(\z),$$
 where the expansion coefficients
 \begin{align*}
 \beta_{i,j}(\x) =\langle f(.|\x),\phi_{i,j}\rangle = \int_{\Re^2} \phi_{i,j}(\z) f(\z|\x) d\z = \E\left[\phi_{i,j}(\Z)|\x\right].
 \end{align*}
 Note that each $\beta_{i }(\x)$ is a regression function of a \emph{scalar} response. In other words,  the FlexCode framework
  allows one to estimate multivariate conditional densities by only using regression estimators of scalar responses.

 
 \emph{Remark:} To avoid tensor products, one can alternatively compute a \emph{spectral basis} $\{\phi_i(\z)\}_{i\geq 0}$  \citep{LeeIzbickiReg}. This basis is orthonormal with respect to the density $f(\z)$ and adapts to the density's intrinsic geometry.  The expansion coefficients
 are then given  by  $\beta_{i}(\x) =\langle f(.|\x),\phi_{i}\rangle_{f(\z)} = \int_{\Re^d} \phi_{i}(\z) f(\z|\x) f(\z) d\z = \E\left[\phi_{i}(\Z)f(\Z)|\x\right],$ in which case one needs to estimate $f(\Z)$ as well.
 

 \subsection{Connection to Other Methods}
\label{sec::connections}

\textbf{Varying-Coefficient Models.} 
The model $f(z|\x)=\sum_{i \in \mathbb{N}}\beta_{i }(\x)\phi_i(z)$ can be viewed as a {\em fully nonparametric} varying-coefficient model. {\em Varying-coefficient models}~\citep{hastie1993varying} are often seen as semi-parametric models or as extensions of classical linear models, 
 in which a function $\eta$ is modeled as $\eta=\sum_{i=1}^d  \beta_i(\x)u_i$, where $\beta_i(\x)$ are smooth functions of the predictors $\x$, and $u_1,\ldots,u_d$ are other predictors. 
 In our case, we have a fully nonparametric model, because $d \longrightarrow \infty$ and  $(u_i)_{i\geq1}:=\{\phi_i(z)\}_{i\geq1}$ is a basis of $\mathcal{L}^2(\Re)$. 


\comment{
 \textbf{Traditional Series CDE.}
  If each $\beta_{i}(\x)$ is estimated using a standard orthogonal series \emph{regression} estimator, FlexCode recovers the standard 
 orthogonal series CDE from \cite{efromovich}. Indeed, let $(\psi_j)_j$ be an orthonormal basis for $\x$ (not necessarily
 the same as $(\phi_i)_i$). A standard orthogonal series estimator is based on the fact that the
 conditional density can be expanded as
$f(z|\x)=\sum_{i\geq1}  \sum_{j\geq1}  \beta_{i,j} \phi_{i}(z) \psi_{j}(\x),$
where 
\begin{align}
\label{eq::betaCDETrad}
\beta_{i,j}= \E\left[\frac{\phi_i(Z)\psi_j(\X)}{f(\X)}\right].
\end{align} 

One typically estimates $\beta_{i,j}$ using $n^{-1}\sum_{k=1}^{n} \phi_i(Z_k)\psi_j(\X_k)(\widehat{f}(\X_k))^{-1}$, where $\widehat{f}$ is an estimate
of the marginal density of the covariates.

Now,  the standard orthogonal series \emph{regression} estimator for $\beta_i(\x)$ is based on the expansion
 $\beta_{i}(\x)=\sum_{j\geq1} \gamma^{(i)}_j \psi_j(\x),$
 where
 \begin{align}
 \label{eq::betaCDETrad2}
 \gamma^{(i)}_j &= \int \beta_{i}(\x) \psi_j(\x) d\x = \int \E\left[\phi_i(Z)|\x\right] \psi_j(\x) d\x =  \notag \\ 
 &= \int \E\left[\frac{\phi_i(Z)\psi_j(\x)}{f(\x)} |\x\right]f(\x) d\x = \E\left[\frac{\phi_i(Z)\psi_j(\X)}{f(\X)}\right].
 \end{align}

One typically estimates $\gamma^{(i)}_j$ using $n^{-1}\sum_{k=1}^{n} \phi_i(Z_k)\psi_j(\X_k)(\widehat{f}(\X_k))^{-1}$, where $\widehat{f}$ is an estimate
of the marginal density of the covariates. By comparing the estimators of
$\beta_{i,j}$ and $\gamma^{(i)}_j$ (Eqs.~\ref{eq::betaCDETrad} and \ref{eq::betaCDETrad2}), we see these methods are equivalent.
 
 \vspace{2mm}
\textbf{Spectral Series CDE.}
  If each $\beta_{i}(\x)$ is estimated using a spectral series \emph{regression} estimator \citep{LeeIzbickiReg}, FlexCode recovers the 
  spectral series CDE of \citet{IzbickiLeeCDE} as a special case of FlexCode. The reason is analogous to that used for showing the connection between
  FlexCode  and standard orthogonal series regression. The difference is that the basis
  $(\psi_j(\x))_j$ used in spectral series regression is orthogonal with respect to $P(\x)$ rather than
  Lebesgue measure, and the expansion coefficients from Eq.~\ref{eq::betaCDETrad} become $\beta_{i,j}= \E\left[\phi_i(Z)\psi_j(\X)\right],$ but also do $\gamma^i_j$ in Eq.~\ref{eq::betaCDETrad2}. See more details
  in \citet{IzbickiLeeCDE}.
 \vspace{2mm}
}

 \vspace{2mm}
  \vspace{2mm}
\textbf{Spectral Series CDE.}
FlexCode recovers the spectral series conditional density estimator of \citet{IzbickiLeeCDE} if each $\beta_{i}(\x)$ is estimated via a {\em spectral series regression} \citep{LeeIzbickiReg}. Indeed, let  $\{\psi_j\}_j$ be a spectral basis for $\x$, where by construction $\int_\mathcal{X} \! \psi_i(\vec{x})\psi_j(\vec{x})dP(\vec{x})=\delta_{i,j}\overset{\mbox{\tiny{def}}}{=}\I(i=j)$ \citep[Sec.~2]{IzbickiLeeCDE}. In spectral series CDE, one writes the conditional density as 
$f(z|\x)=\sum_{i\geq1}  \sum_{j\geq1}  \beta_{i,j} \phi_{i}(z) \psi_{j}(\x),$
where the coefficients
\begin{align}
\label{eq::betaCDESpec}
\beta_{i,j}= \iint f(z|\vec{x})  \phi_{i}(z) \psi_{j}(\x) \:dP(\vec{x})dz = \E\left[\phi_i(Z)\psi_j(\X)\right].
\end{align} 

Now,  a spectral series regression for $\beta_i(\x)=\E\left[\phi_i(Z)|\x\right]$ is based on the model $\beta_{i}(\x)=\sum_{j\geq1} \gamma^{(i)}_j \psi_j(\x),$
 where
 \begin{align}
 \label{eq::betaCDESpec2}
 \gamma^{(i)}_j &= \int \beta_{i}(\x) \psi_j(\x) \:dP(\vec{x}) = \int \E\left[\phi_i(Z)|\x\right] \psi_j(\x) \:dP(\vec{x}) =  \notag \\ 
 &= \int \E\left[\phi_i(Z)\psi_j(\X) |\x\right] \:dP(\x) = \E\left[\phi_i(Z)\psi_j(\X)\right].
 \end{align}
By inserting $\beta_{i}(\x)$ into Eq.~\ref{eq::FlexCode}, we see that Spectral Series CDE \citep{IzbickiLeeCDE} is a special case of FlexCode.
Henceforth, we will refer to this version of FlexCode as \emph{FlexCode-Spec}.

 \emph{Remark:}  Using similar arguments, one can show that FlexCode recovers the orthogonal series CDE of \citet{efromovich} if each $\beta_{i}(\x)$ is estimated via traditional orthogonal series regression. However, as discussed in  \citet{IzbickiLeeCDE}, traditional series approaches via tensor products quickly become intractable in high dimensions. 
Nevertheless, it is interesting to note that FlexCode forms a very large family of CDE approaches that includes Spectral Series CDE and traditional orthogonal series CDE as special cases.
   
  \vspace{2mm}

\section{Experiments}
\label{sec::appli}

In what follows, we compare the following estimators: 

\begin{itemize}[leftmargin=*]
	\item  FlexCode is our proposed series approach. We implement  six versions of FlexCode, where we use different regression methods to compute the coefficients $\widehat{\beta}_{i }(\x) = \widehat{\E}\left[\phi_i(Z)|\x\right]$ in Eq.~\ref{eq:cdeEst}.
	\emph{FlexCode-SAM} is based on Sparse Additive Models \citep{Ravikumar:EtAl:2009}.\footnote{Sparse additive regression models can be useful even if the true coefficients $\beta_{i }(\x)$ are not additive, because of the curse of dimensionality and the ability of sparse additive models to identify irrelevant coefficients without too restrictive assumptions.} 
	\emph{FlexCode-NN} is based on  Nearest Neighbors regression \citep{Hast:Tibs:Frie:2001}. \emph{FlexCode-Spec} uses  Spectral Series regression  \citep{LeeIzbickiReg} and  is, as shown in Sec.~\ref{sec::connections}, the same as Spectral Series CDE, the conditional density estimator in \citet{IzbickiLeeCDE}. For mixed data types, we implement  \emph{FlexCode-RF}, which estimates the regression functions via random forests \citep{breiman2001random},  and for functional data, we use \textit{FlexCode-fKR}, where the coefficients in the model are estimated via functional kernel regression \citep{ferraty2006nonparametric}. Finally, in Sec.~\ref{sec::distributional},  we illustrate how \textit{FlexCode-SDM} can extend Support Distribution Machines (SDM; \citealt{sutherland2012kernels}) and other distribution regression methods to estimating conditional densities on {\em sample sets}  or groups of vectors. 
	

	\item \textit{KDE} is the kernel density estimator $\widehat{f}(z|\x):=\widehat{f}(z,\x)/\widehat{f}(\x)$, where  $\widehat{f}(z,\x)$ and $\widehat{f}(\x)$ 
	are standard multivariate kernel density estimators. We rescale the data to have the same mean and variance in each direction, and we assume an isotropic Gaussian kernel for both $\x$ and $z$, i.e.,
	$$ \widehat{f}(z|\x) = \frac{\sum_{i=1}^{n} K_{h_x}(\|\x-\X_i\|)  K_{h_z}(z-Z_i)}{ \sum_{i=1}^{n} K_{h_x}(\|\x-\X_i\|) },$$
	where $K_h(t)=h^{-d} K(t/h)$ denotes an isotropic Gaussian kernel with bandwidth $h$ in $d$ dimensions.

	\item \textit{KDE$_{\mbox{\tiny Tree}}$} is the multivariate kernel density estimator $\widehat{f}(z|\x):=\widehat{f}(z,\x)/\widehat{f}(\x)$,
	where the estimators $\widehat{f}(z,\x)$ and $\widehat{f}(\x)$ have a different bandwidth  for each component of $\x$ \citep{Hall2}; i.e.,
	$$ \widehat{f}(z|\x) = \frac{\sum_{i=1}^{n}  K_{h}(\x-\X_i) K_{h_z}(z-Z_i)}{ \sum_{i=1}^{n} K_{h}(\x-\X_i) },$$
	where $K_h(\x-\X_i)=(h_1 \ldots h_d)^{-1}\prod_{j=1}^{d} K\left(\frac{x_j-X_{ij}}{h_j}\right)$ for  data $\X_i=(X_{i1},\ldots, X_{id})$ and  a bandwidth vector $h=(h_1,\ldots, h_d)$.	
	We use the R package \texttt{np} \citep{np}
	with kd-trees and Epanechnikov kernels for computational efficiency \citep{gray2003nonparametric,Holmes}. 
	
	
	\item  \textit{kNN} is a kernel nearest neighbors approach \citep{Lincheng, IzbickiLeeFreeman} to conditional density estimation; it is defined as
	$$\widehat{f}(z|\vec{x})\propto \sum_{j \in \mathcal{N}_k(\vec{x})}   K_{\epsilon}\left(z-Z_j\right),$$ 
	where $\mathcal{N}_k(\vec{x})$ is the set of the $k$ closest neighbors to $\x$ in the training set, and
	$K_{\epsilon}$ is a multivariate (isotropic) Gaussian kernel with bandwidth $\epsilon$. 

	\item     \textit{fkDE}   is a nonparametric conditional density estimator for functional data \citep{quintela2011nonparametric}. It is defined as
		$$ \widehat{f}(z|\x) = \frac{ \frac{1}{h_z}\sum_{i=1}^{n}K\left(\frac{d(x,X_i)}{h_x}\right) K_0\left(\frac{z-Z_i}{h_z}\right) }{ \sum_{i=1}^{n}  K\left(\frac{d(x,X_i)}{h_x}\right)},$$
	where $d$ is a distance measure in the (functional) space of the data, $K$ and $K_0$ are isotropic kernel functions, and $h_x$ and $h_z$ are tuning parameters.
\end{itemize}

Note that for {\em regression}, SAM is designed to work well when there is a small number of relevant covariates, and both Spectral Series Regression and Nearest Neighbors Regression perform well when the covariates exhibit a low intrinsic dimensionality. To our knowledge, \textit{KDE$_{\mbox{\tiny Tree}}$} is  the only CDE method that can handle mixed data types.

\subsection{Toy Examples} 

By simulation, we create toy versions of common scenarios with different structures in data and different types of data. We use 700 data points for training, 150 for validation and 150 for testing the methods. Each simulation is repeated 200 times. 

\subsubsection*{\underline{Different structures in data.}}

\begin{itemize}
	\item \textbf{Irrelevant Covariates.}
	In this example, we generate data according to
	$Z|\x \sim N(x_1,0.5),$
	where $\X=(X_1,\ldots,X_d) \sim N(\vec{0},I_d)$, that is, only the first covariate influences the response. 
	\vspace{2mm}
	
	\item \textbf{Data on Manifold.}
	Here we let
	$Z|\x \sim N(\theta(\x),0.5),$
	where $\x=(x_1,\ldots,x_d)$ lie on a unit circle
	embedded in a $D$-dimensional space, and $\theta(\x)$ is the angle corresponding to the position of $\x$. For simplicity, we assume that the data are uniformly distributed on the manifold; i.e., we let $\theta(\x) \sim Unif(0,2\pi)$. 
	\vspace{2mm}
	
	\item \textbf{Non-Sparse Data.}
	Finally, we consider data with no sparse (low-dimensional) structure. We assume
	$Z|\x \sim N(\overline{\x},0.5),$
	where $\X=(X_1,\ldots,X_d) \sim N(\vec{0},I_d)$. 

\end{itemize}
\vspace{2mm}

\subsubsection*{\underline{Different types of data.}}
\begin{itemize}
	\item \textbf{Mixed Data Types.} Few existing CDE methods can handle mixed data types; the only other method the authors are aware of is \textit{KDE$_{\mbox{\tiny Tree}}$}. For our study, we generate mixed categorical and continuous data, where the categorical covariates $(X_1,\ldots,X_{D/2})$ are i.i.d. 
	$Unif\{c_1,c_2,c_3,c_4,c_5\}$, and the continuous covariates $(X_{D/2+1},\ldots,X_{D})$
	are i.i.d. $N(0,1)$. The response is given by 
	\[ Z|\x \sim
	\begin{cases}
	N(x_{D/2+1},0.5)    & \quad \text{if } x_{1}\in \{c_1,c_2\} \\
	10+ 2N(x_{D/2+2},0.5)   & \quad \text{if } x_{1}\in \{c_3,c_4,c_5\}\\
	\end{cases}
	\]
	\vspace{2mm}

	\item \textbf{Functional Data.} We also consider spectrometric data for finely chopped pieces of meat. These high-resolution spectra are available\footnote{\url{http://lib.stat.cmu.edu/datasets/tecator}; the original data source is Tecator AB} as a benchmark for functional regression models (see, e.g.,  
	\citet{ferraty2007nonparametric}), where the task is to
	predict the fat content of a meat sample on the basis of its near infrared absorbance spectrum. In our study, we use 215 samples to estimate conditional densities. 
	The covariates are spectra of light absorbance as functions of the
	wavelength, and the response is the fat content of a piece of meat. 
		We compare the functional kernel density estimator (\textit{fKDE}) with a FlexCode approach  (\textit{FlexCode-fKR}), where the coefficients in the model are estimated via functional kernel regression \citep{ferraty2006nonparametric}. 
	We follow \citet{ferraty2007nonparametric} and implement both methods with the kernel function $K(u)=1-u^2$ and the $\mathcal{L}^2(\Re)$ norm between the second derivatives of the spectra as a distance measure. We use 70\% of the data points for training, 15\% for validation and 15\% for testing; the experiment is repeated 100 times by randomly splitting the data.

		
\end{itemize}


\subsubsection{Results}

Figures~\ref{fig::SimExamples}-\ref{fig::SimExamplesLargeD} show the results for the toy data. Our main observations are:
\begin{itemize} 
	\item\textbf{Irrelevant Covariates.} In terms of estimated loss  (Fig.~\ref{fig::SimExamples}, top left),  both \textit{FlexCode-SAM} and  \textit{KDE$_{\mbox{\tiny Tree}}$}  outperform the other methods. 
	However, 
	in terms of computational time (Fig.~\ref{fig::SimExamples}, bottom left), \textit{FlexCode-SAM}  is clearly faster than \textit{KDE$_{\mbox{\tiny Tree}}$} as the dimension $D$ of the data grows.
	%
	When $D=17$, each  fit with \textit{KDE$_{\mbox{\tiny Tree}}$}  already takes an average of 240 seconds (4 minutes) on an Intel i7-4800MQ CPU 2.70GHz processor, compared to 22 seconds for \textit{FlexCode-SAM}.  Fig.~\ref{fig::SimExamplesLargeD}, left, shows that the loss of \textit{FlexCode-SAM} remains the same even for large $D \sim 1000$, although fitting the estimator becomes computationally more challenging in high dimensions. Nevertheless, fitting \textit{KDE$_{\mbox{\tiny Tree}}$}  would be unfeasible for $D>50$.
	
	\item\textbf{Data on Manifold.} \textit{FlexCode-Spec}
	has the best statistical performance, followed by
	\textit{FlexCode-NN} and \textit{KDE$_{\mbox{\tiny Tree}}$} (Fig.~\ref{fig::SimExamples}, top center).
	As before, the computational time  of \textit{KDE$_{\mbox{\tiny Tree}}$} increases rapidly with the dimension (Fig.~\ref{fig::SimExamples}, center bottom). For these data, \textit{FlexCode-SAM} is slow as well even for moderate $D$, perhaps because SAM cannot find sparse representations of the regression functions.  On the other hand (see Fig.~\ref{fig::SimExamplesLargeD}, {\em right}), \textit{FlexCode-Spec} has a computational time that is almost constant as a function of $D$ and the statistical performance remains the same even for large $D$.
	The latter result is consistent with our previous findings that spectral series adapt to the intrinsic dimension of the data \citep{izbickiLeeSchafer,IzbickiLeeCDE,LeeIzbickiReg}.

	
	\item\textbf{Non-Sparse Data.} For this example,
	\textit{FlexCode-Spec} and \textit{FlexCode-SAM}
	are the best estimators.		
	
	\item\textbf{Mixed Data Types.}  \textit{FlexCode-RF} yields better results than 
	\textit{KDE$_{\mbox{\tiny Tree}}$} (its only competitor in this setting)
	both in terms of estimated loss and computational time; see Fig.~\ref{fig::MixedData}. The computational advantage is especially obvious for larger values of $D$. 
		When the dimension $D=56$, each  fit  of \textit{KDE$_{\mbox{\tiny Tree}}$} takes an average of 2250 seconds ($\approx$ 37 minutes) on an Intel i7-4800MQ CPU 2.70GHz processor, compared to 304  seconds ($\approx$ 5 minutes) for \textit{FlexCoDR-RF}.
		
		\item\textbf{Functional Data.}  		FlexCode via Functional kernel regression improves upon the results of the  traditional Functional kernel density estimator with an estimated loss of -2.78 (0.07) instead of -2.08 (0.03). 

	\end{itemize}
	
	\begin{figure}[H]
		\centering
		\subfloat{  \includegraphics[page=1,scale=0.27]{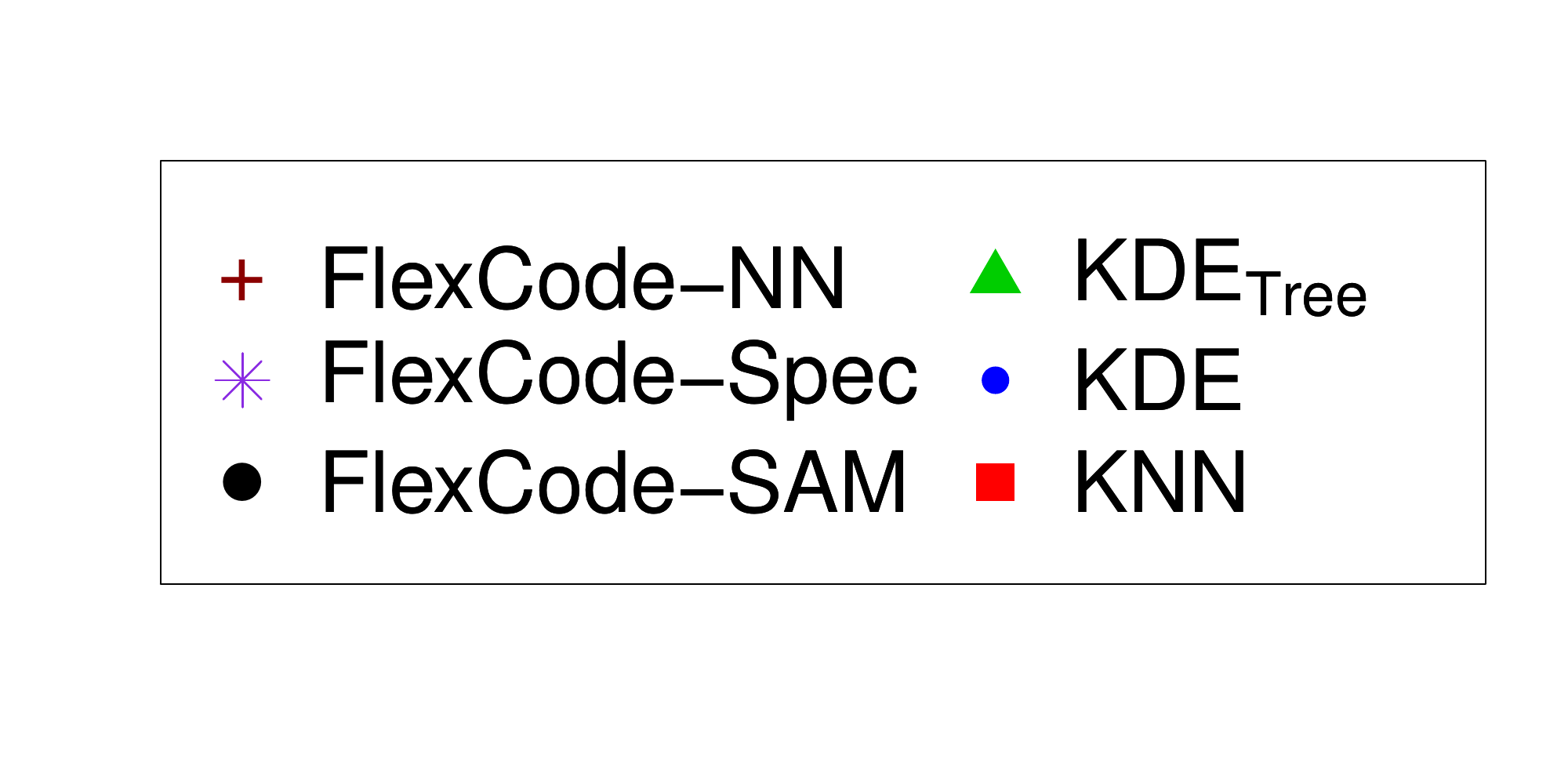}} 
		\\[-8.0mm] 
		\subfloat{  \includegraphics[page=1,scale=0.27]{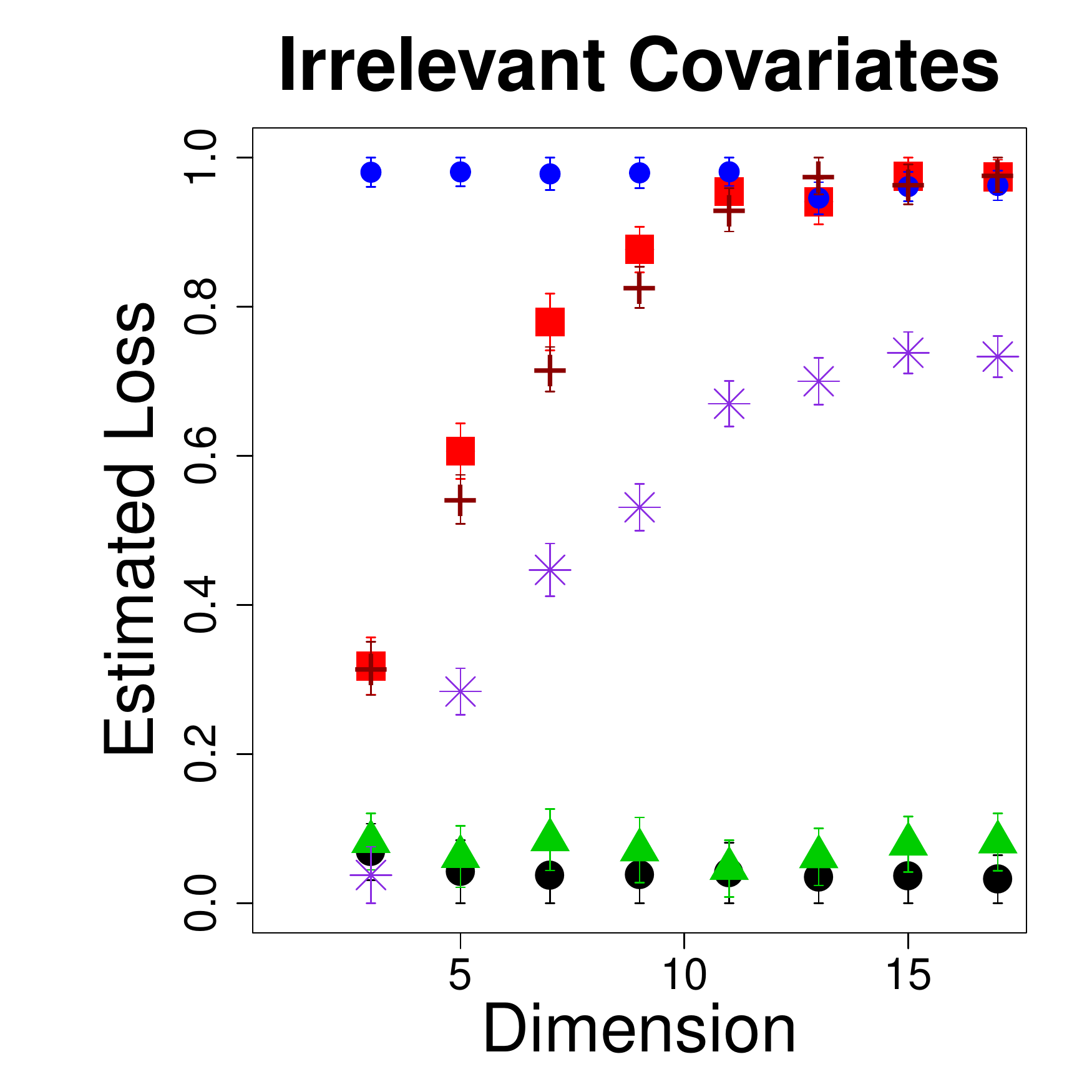}} 
		\subfloat{  \includegraphics[page=5,scale=0.27]{analysisRefereesForPaper.pdf}} 
		\subfloat{  \includegraphics[page=3,scale=0.27]{analysisRefereesForPaper.pdf}} 
		\\[-4.0mm] 
		\subfloat{  \includegraphics[page=2,scale=0.27]{analysisRefereesForPaper.pdf}} 
		\subfloat{  \includegraphics[page=6,scale=0.27]{analysisRefereesForPaper.pdf}} 
		\subfloat{  \includegraphics[page=4,scale=0.27]{analysisRefereesForPaper.pdf}} 
		\vspace{-3mm}
		\caption{\footnotesize  Examples with different structures in data. {\em Top row:} Estimated loss as a function of the dimension $D$. {\em Bottom row:} Computational time as a function of $D$. With a properly chosen regression method, \textit{FlexCode} performs better than the other estimators (\textit{KDE$_{\mbox{\tiny Tree}}$}, \textit{KDE}, and \textit{kNN}).
		}
		\label{fig::SimExamples}
	\end{figure}

	\begin{figure}[H]
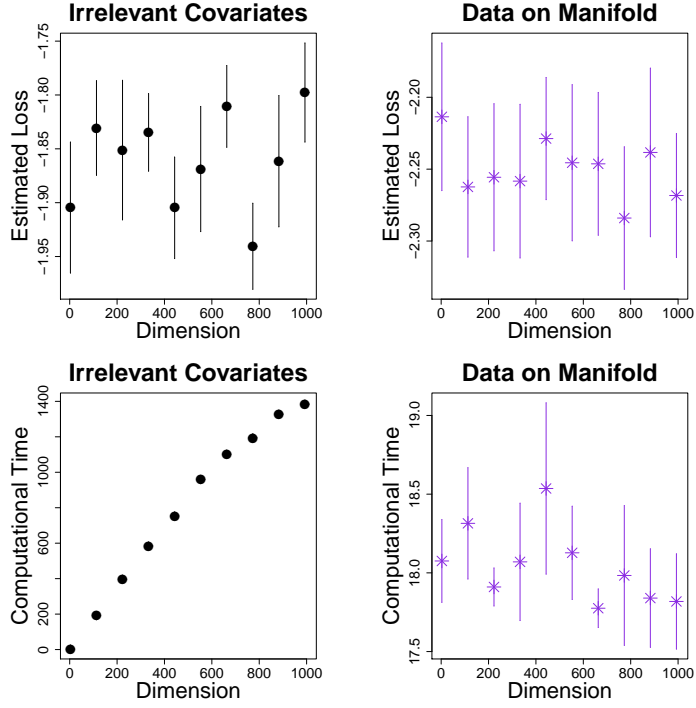

		\centering
		\subfloat{  \includegraphics[page=9,scale=0.27]{analysisRefereesForPaper.pdf}} 
		\subfloat{  \includegraphics[page=11,scale=0.27]{analysisRefereesForPaper.pdf}} 	\\[-4.0mm] 
		\subfloat{  \includegraphics[page=10,scale=0.27]{analysisRefereesForPaper.pdf}} 
		\subfloat{  \includegraphics[page=12,scale=0.27]{analysisRefereesForPaper.pdf}} 
		\vspace{-3mm}
		\caption{\footnotesize Different structures in data for large values of $D$. Estimated loss ({\em top row}) and computational time ({\em bottom row}) of FlexCode in the two settings ``Irrelevant Covariates'' ({\em left}; implemented with \textit{FlexCode-SAM}) and ``Data on Manifold''  ({\em right}; implemented with \textit{FlexCode-Spec}). These two estimators yield the best results in Fig.~\ref{fig::SimExamples}; here we see their behavior in higher dimensions. 
		 }
		\label{fig::SimExamplesLargeD}
	\end{figure}
	

	\begin{figure}[H]
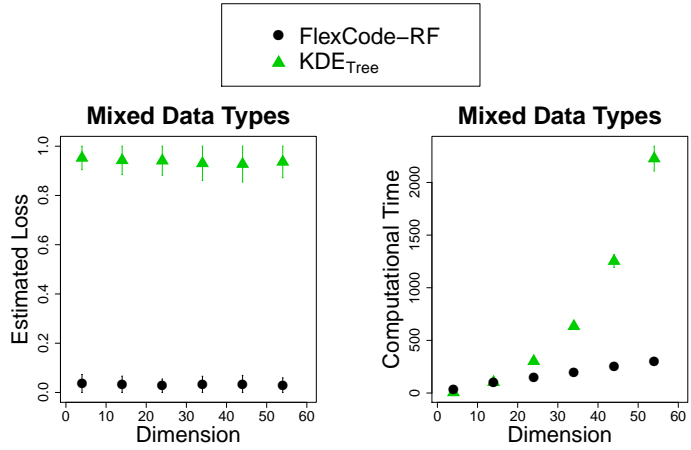

		\centering
		\subfloat{ \includegraphics[page=2,scale=0.2]{analysisRefereesForPaperLegend.pdf}} 
		\\[-8.0mm] 
		\subfloat{  \includegraphics[page=7,scale=0.27]{analysisRefereesForPaper.pdf}} 
		\subfloat{  \includegraphics[page=8,scale=0.27]{analysisRefereesForPaper.pdf}} 
		\vspace{-3mm}
		\caption{\footnotesize Example with mixed data.  Estimated loss ({\em left}) and computational time ({\em right}) 
			for FlexCode via Random Forests (\textit{FlexCode-RF}) and \textit{KDE$_{\mbox{\tiny Tree}}$}.  Few conditional estimators can handle  covariates with mixed data types, but FlexCode is flexible enough to adapt to this setting.}
		\label{fig::MixedData}
	\end{figure}
	

	
	\subsection{Photometric Redshift Estimation}
	\label{sec::photoz}
	

 Our first application is photometric redshift estimation. Redshift (a proxy for a galaxy's distance from the Earth) is a key quantity for inferring cosmological model parameters. Redshift can be estimated with high precision via spectroscopy but the resource considerations of large-scale sky surveys call for {\em photometry} -- a much faster measuring technique, where the radiation from an astronomical objects is generally coarsely recorded via $\sim$5-10 broad-band filters. In photometric redshift estimation, the goal is to 
estimate the redshift $z$
of a galaxy based on its observed photometric covariates $\vec{x}$, using a sample of galaxies with spectroscopically confirmed redshifts. Because of degeneracies  (two galaxies with different redshifts can have similar photometric signatures) and because of complicated observational noise, probability densities of the form $f(z|\x)$ better describe the relationship between $\x$ and $z$ than the regression $\E(z|\x)$ does.


In this example, we test our CDE methods on  $n=752$  galaxies from {\em COSMOS}, with $D=37$ covariates derived from a variety of photometric bands (these data were obtained
from T.~Dahlen 2013, private communication; see \citealt{IzbickiLeeFreeman} for additional details).  Figure \ref{fig::photoZLoss} summarizes the results. All versions
	of FlexCode improve upon the traditional estimators. The best performance
	is achieved for FlexCode via Sparse Additive Models (\textit{FlexCode-SAM}), which indicates that
	only a subset of the 37 covariates are relevant for redshift estimation; for these data, \textit{FlexCode-SAM} selected $\approx$ 18  variables in each regression, and three out of the 37 covariates were present in more than 75\%
	of the regressions.

%
%
%
%
%

		\begin{figure}[H]
			\centering
			\subfloat{  \includegraphics[page=1,scale=0.27]{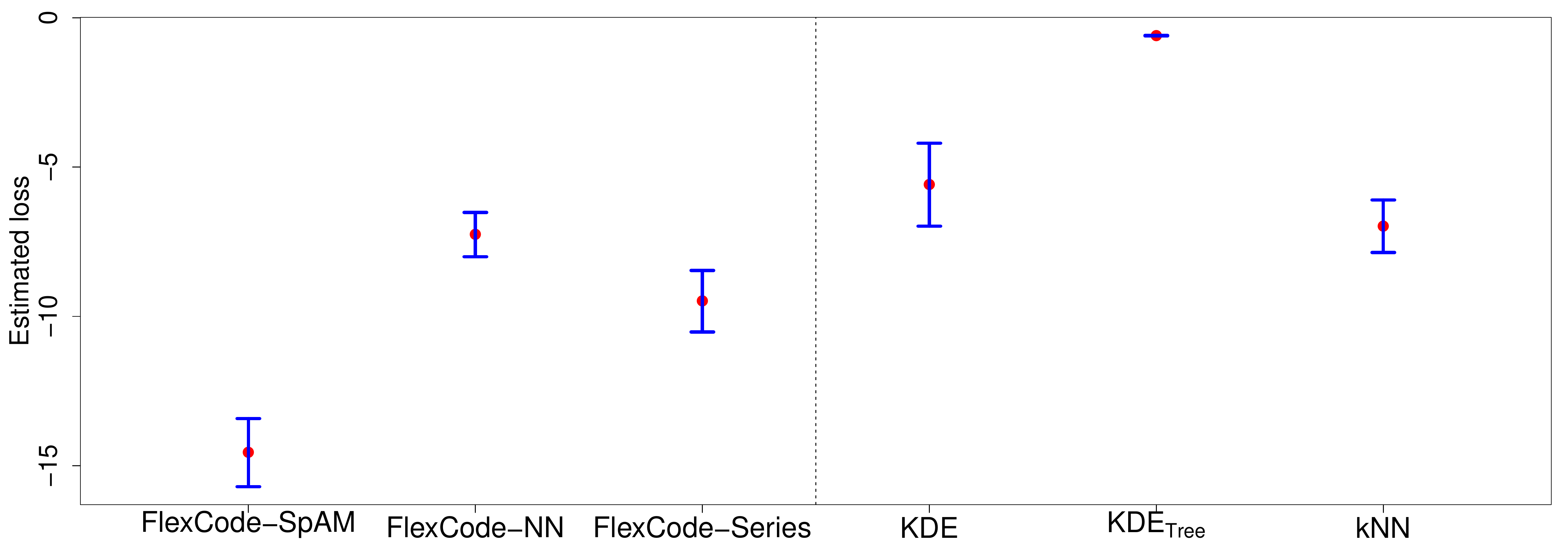}} 
			\vspace{-3mm}
			\caption{\footnotesize Estimated losses of conditional density estimators for photometric redshift prediction. All versions
	of FlexCode (to the left) improve upon the traditional estimators (to the right)}
			\label{fig::photoZLoss}
		\end{figure}

	\subsection{Twitter Data}
	\label{sec::twitter}

	Twitter is a social network where each user is
	able to post a small text (a tweet) containing at most 140
	characters. 
	Information about the location of the post is available upon user permission, but only a few users allow this information to be publicly shared.
	Here we use samples with known  locations to estimate the location of tweets
	where this information has not been shared publicly.	
	
	Note that most literature on the topic concerns creating  {\em point estimates}
	for locations  (see, e.g., \citealt{Rodrigues2015} 
	and references therein). In this work, we estimate
	the \emph{full conditional distribution} of latitude and longitude
	given the content of the tweet; that is,
	we estimate
	$f(\z|\x)$, where $\x$ are covariates extracted
	from the tweets and
	$\z=(z_1,z_2)$ is the pair latitude/longitude.
	
	Our data set   
	 contains $\approx 8000$ tweets
	in the USA from July 2015 with the word ``beach". We extract 500 covariates via a bag-of-words method with the most frequent unigrams and bigrams \citep{manning2008introduction}.
	As we only expect a few of the 500 covariates to be relevant to locating the tweets, we implement FlexCode via sparse additive models.
	Figure \ref{fig::twitter} shows two examples of estimated
	densities; see Supplementary material for additional examples. To our knowledge, 
	no other fully nonparametric conditional density estimation method can be directly applied to  these types of data where there are many irrelevant variables.

	Moreover, because \textit{FlexCode-SAM} is based on sparse additive models, we can find out which covariates are most relevant for predicting location.
	For the example in Fig.\ref{fig::twitter}, left,
	the expressions 
	``beachin",``boardwalk", and ``daytona"
	are included in at least 33\% of the estimated regression functions. 
	For the example to the right, the relevant covariates are 
	``long beach", ``island",
	``long", and ``haven". 

		\subsection{From Distribution Regression to ``Distribution CDE'': 
		Estimating the Mass of a Galaxy Cluster from Sample Sets of Galaxy Velocities}
		\label{sec::distributional}

		Distribution regression and classification is a recent emerging field of machine learning. Instead of treating individual data points (or feature vectors) as covariates, these methods operate on {\em sample sets}, where each set is a sample from some underlying feature distribution; see \citet{sutherland2012kernels} and references within.  Here we show that FlexCode extends to sample sets as well; 
		our application is estimation of the mass of a galaxy cluster given the line-of-sight velocities of the galaxies in the  cluster.
				
			Galaxy clusters, the most massive gravitationally bound systems in the Universe, can contain up to $\sim$1000 galaxies. These structures are a rich source of information on astrophysical processes 
		and cosmological parameters,  
		but to use galaxy clusters as cosmological probes one needs to accurately measure their masses.
		A standard approach is to
		employ the classical virial theorem and directly relate the mass of a cluster to  the line-of-sight (LOS) galaxy velocity dispersion, i.e., the variance of the measured galaxy velocities in the cluster  \citep{evrard2008virial}. Recently, \citet{ntampaka2015machine} and \citet{ntampaka2015dynamical} have shown that one can significantly improve such mass predictions by taking advantage of the entire LOS {\em velocity distribution} of galaxies instead of only the dispersion (i.e., a summary of the distribution). Here we show that FlexCode can further improve these results.

		The general set-up is that we observe data of the form
		$(\x^{(1)}_1,
		\ldots,\x_{1}^{(J_1)},z_1),\ldots,(\x_{I}^{(1)},
		\ldots,\x_{I}^{(J_I)},z_I)$, where $z_i$
		is the mass of the $i$-th cluster for $i=1,\ldots,I$; and
		$\x_{i}^{(j)}$ is a vector of galaxy observables (such as LOS velocity and the projected distance from the cluster center) for the 
		$j$-th galaxy in the $i$-th cluster.
		Note that different clusters $i$ contain different numbers $J_i$ of galaxies. The key idea behind Support Distribution Machines (SDMs; proposed for this application by \citealt{ntampaka2015machine}) as well as other ``distribution regression'' methods \citep{sutherland2012kernels}, is to treat each sequence
		$\x_{i}^{(1)},
		\ldots,\x_{i}^{(J_i)}$ as a sample from
		a probability distribution $p_i$, and to construct an appropriate kernel matrix
		 on these sample sets. The task is then to predict a scalar ($z_i$) from a distribution ($p_i$) by estimating $\mathbb{E}[Z|p]$.
		Here we show how FlexCode extends regression on distributions to {\em conditional density estimation on distributions}; i.e., instead of providing a point estimate (and standard error) of the mass of a galaxy cluster, we estimate the full probability density $f(z|p)$ of the unknown mass of a galaxy cluster given galaxy observables. In our application, the response $z_i$ is the logarithm of the cluster mass (log M) and the observables $\{x_{i}^{j}\}_{j=1}^{J_i}$ are scalar quantities that represent the absolute values of galaxy velocities along {\em one} line-of-sight.		
	\comment{	The challenge in this particular problem is that 
		the observed samples
		are given by
		$(\x^{(1)}_1,
		\ldots,\x_{1}^{(I_1)},z_1),\ldots,(\x_{n}^{(1)},
		\ldots,\x_{n}^{(I_n)},z_n)$, where $z_i$
		is the log mass of the $i$-th cluster and
		$\x_{i}^{(j)}$ is the vector of observed
		absolute values of the velocities of the 
		$j$-th galaxy of the $i$-th cluster
		along each line-of-sight. Notice different  clusters may have
		a different number of galaxies on them. It is therefore not 
		possible to use standard conditional density estimators; in fact even the regression task is challenging here. 
		The only regression method we are aware of that has been applied
		to this problem is the so-called \emph{Support
			Distribution Machine} \citep{ntampaka2015machine}.
		The key idea is to treat each sequence
		$\x^{(i)}_1,
		\ldots,\x_{i}^{(I_i)}$ as a sample from
		a distribution $p_i$, and build an appropriate kernel matrix
		based on this. Here we show how one can easily adapt this idea
		for conditional density estimation via FlexCode. 
		All the approaches we compare here are based on the similarity matrix build by \citet{ntampaka2015machine}, which we describe in details in the sequence.}

		Like \citet{ntampaka2015machine}, we use the Kullback-Leibler (KL) divergence to measure similarity between pairs of velocity distributions, and we estimate the divergence from the observed galaxy velocities with the estimator from \citet{wang2006nearest}. The details are as follows: Let $p_A$ and $p_B$ denote velocity distributions 
		for clusters $A$ and $B$, respectively. Define
	 the kernel $k(p_A,p_B)=\exp{(-\mbox{KL}(p_A, p_B)/\sigma^2)},$
		where $\mbox{KL}(p_A, p_B)$ is the Kullback-Leibler divergence between $p_A$ and $p_B$.
		We estimate the KL divergence via Wang et al's $k$ nearest neighbors method for $k=2$. That is, let $X_A$ denote the set of LOS velocities associated with the
		$n$ galaxies of cluster $A$, 
		and let 
		$X_B$ denote the set of velocities associated with the
		$m$ galaxies of cluster $B$. The estimated KL divergence from $p_A$ to $p_B$ is given by
		$$\mbox{KL}_{n,m}(X_A,X_B)=\frac{d}{n}\sum_{i=1}^n \log \frac{\nu_k(i)}{\rho_k(i)}+\log \frac{m}{n-1},$$
		where
		$\nu_k(i)$ is the Euclidean distance from the covariates (in this case, the LOS velocity) of the $i$-th galaxy in $X_A$
		to its $k$-th nearest neighbor in $X_B$, $\rho_k(i)$ is the Euclidean distance from the covariates (the LOS velocity) of the $i$-th galaxy in $X_A$
		to its $k$-th nearest neighbor in $X_A$, and $d$ is the number of galaxy observables (in this example, $d=1$).
		As the computed kernel matrix $k(X_A,X_B)=\exp{(-\mbox{KL}_{n,m}(X_A,X_B)/\sigma^2)}$ may not be 
		positive semi-definite (PSD), we project the matrix to the closest
		PSD matrix in Frobenius norm \citep{higham2002computing}.

		Using the PSD kernel matrix, we then estimate the conditional density $f(z|p)$. 
		We compare four approaches to conditional density estimation on distributions, which as in the rest of the paper use a Fourier basis in $z$;
		\begin{itemize}
		\item  Functional KDE: the functional kernel density estimator \citep{quintela2011nonparametric},
						\item FlexCode-NN: FlexCode with Nearest Neighbors regression,
			\item FlexCode-Spec: FlexCode with Spectral Series regression,
			\item FlexCode-SDM: FlexCode with SDM regression.
		\end{itemize}
		\vspace{2mm}
 In the experiments, we also include a FlexCode estimator that use a wavelet basis in $z$;
		\begin{itemize}
		\item FlexCode$_W$-SDM: FlexCode with SDM regression in $x$, and Daubechies wavelets with 3 vanishing moments in $z$.
		\end{itemize}

	 Our data consist of simulations of $n=5028$ unique galaxy clusters with minimum mass of $1 \times 10^{14} \ M_\odot h^{-1}$;
	see \citet{ntampaka2015machine} for details. All four methods above are based on the same distance computation $\mbox{KL}_{n,m}(X_A,X_B)$ with $k=2$, and we use data splitting and the loss (\ref{lossEmpirical}) for selecting tuning parameters. For simplicity, we only consider
	one LOS for each cluster (the $x$-axis LOS in the catalog).
	

		It is clear from Table \ref{tab::cluster} that the FlexCode-SDM  and FlexCode$_W$-SDM 
		estimates of conditional density are more accurate than the results from any other method. The coverage plots (see Appendix~\ref{sec::Appendix_diagnostics} for the definition) in the bottom panel of Fig~\ref{fig::clusterCoverage} also verify that these density estimates fit the observed data well. 
	\vspace{0.2in}

	  \begin{table}[H]
	  	\caption{\footnotesize Estimated losses of conditional density estimates of galaxy cluster mass.}
		\vspace{-0.3in}
	  	\begin{center}
	  		\begin{tabular}{l*{6}{c}ccc}   \textit{Functional KDE} & \textit{FlexCode-NN} & \textit{FlexCode-Spec} & \textit{FlexCode-SDM} & \textit{FlexCode$_W$-SDM} \\ 
	  			\hline  
				-0.98 (0.02) &   -1.60 (0.05)  & -1.86 (0.04)  & \textbf{-2.46 (0.09)} & \textbf{-2.71 (0.09)}
	  		\end{tabular}
	  		\label{tab::cluster}
	  	\end{center}
	  \end{table}
		The top left panel of Figure \ref{fig::clusterCoverage} shows examples of density estimates from FlexCode$_W$-SDM for 16 randomly chosen clusters.  Several of these distributions are bimodal, in which case regression estimates are not very informative. This can be further illustrated by Fig.~\ref{fig::clusterMultimodal}. The left panel shows a scatter plot of the observed log masses  versus the estimated conditional mean $\widehat{\E}[Z|p]:=\int \! z \widehat{f}(z|p)dz$ for unimodal versus multimodal cases. The right panel shows a boxplot of  the absolute fractional mass error $|\varepsilon|$  for the two populations; 
	 the fractional mass error $\varepsilon$ is defined as  \citep{ntampaka2015machine}
	 $$ \varepsilon = \frac{M_{\rm pred}-M}{M},$$
	where  $M$ is the observed cluster mass and $M_{\rm pred}$ is the predicted cluster mass. Much of the scatter can indeed be attributed to multimodal densities and non-standard prediction settings.

%

\begin{figure}[H]
	\centering
	\subfloat[FlexCode$_W$-SDM]{  \includegraphics[page=1,scale=0.33]{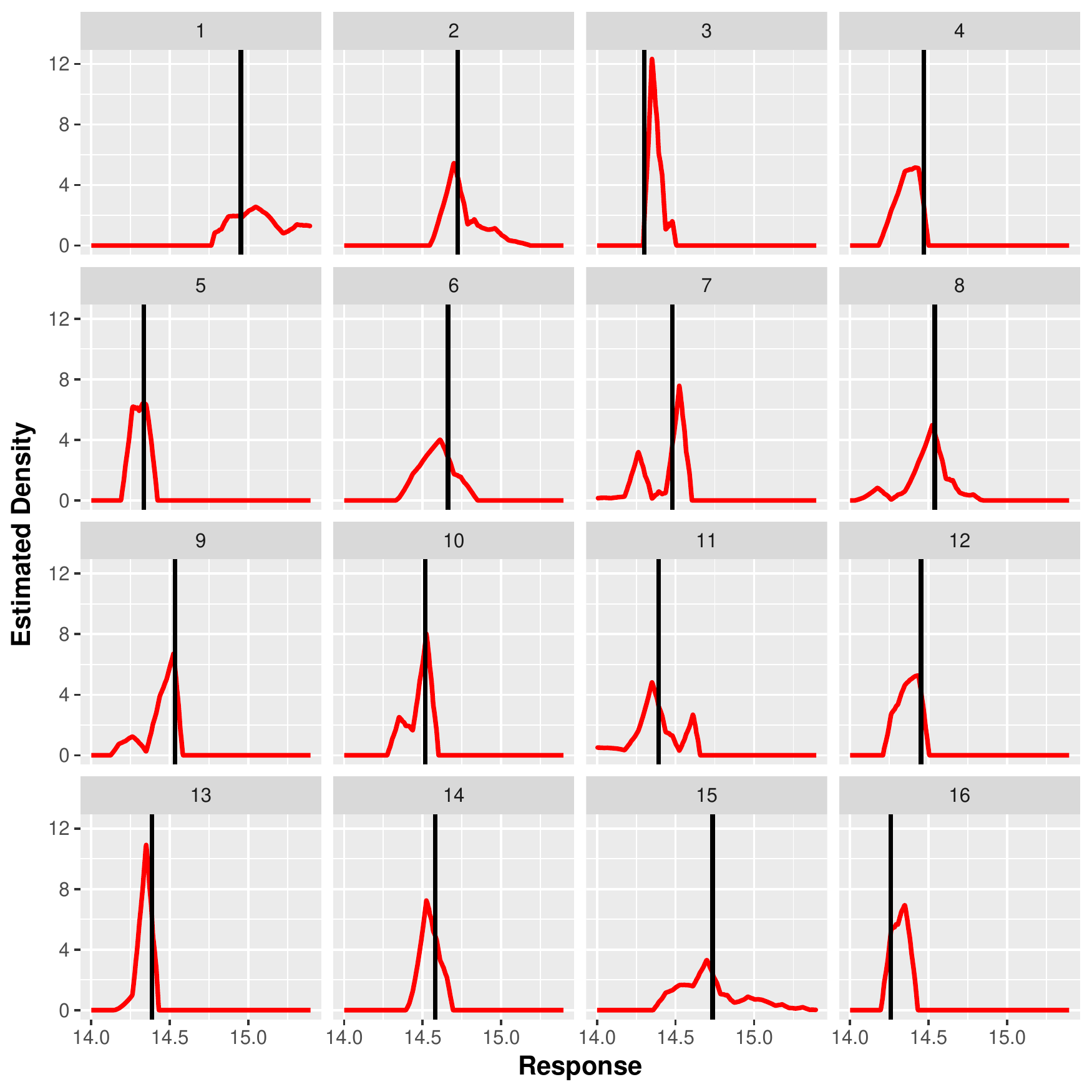}}  
	\subfloat[FlexCode$_W$-SDM]{  \includegraphics[page=2,scale=0.35]{coverageDensitiesWavelets.pdf}}  \\
	\subfloat[FlexCode-SDM]{  \includegraphics[page=11,scale=0.33]{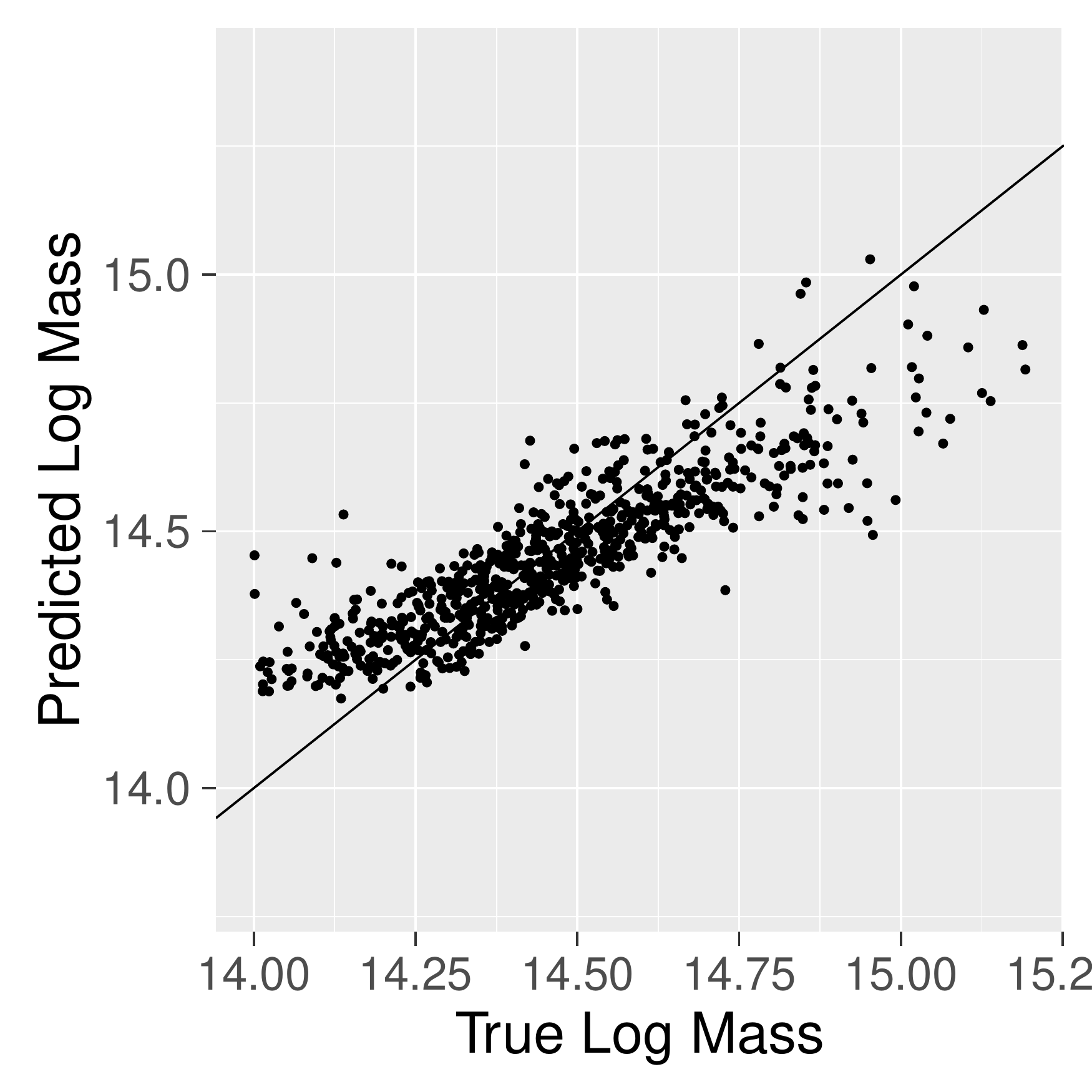}} 					
	\subfloat[FlexCode$_W$-SDM]{  \includegraphics[page=11,scale=0.33]{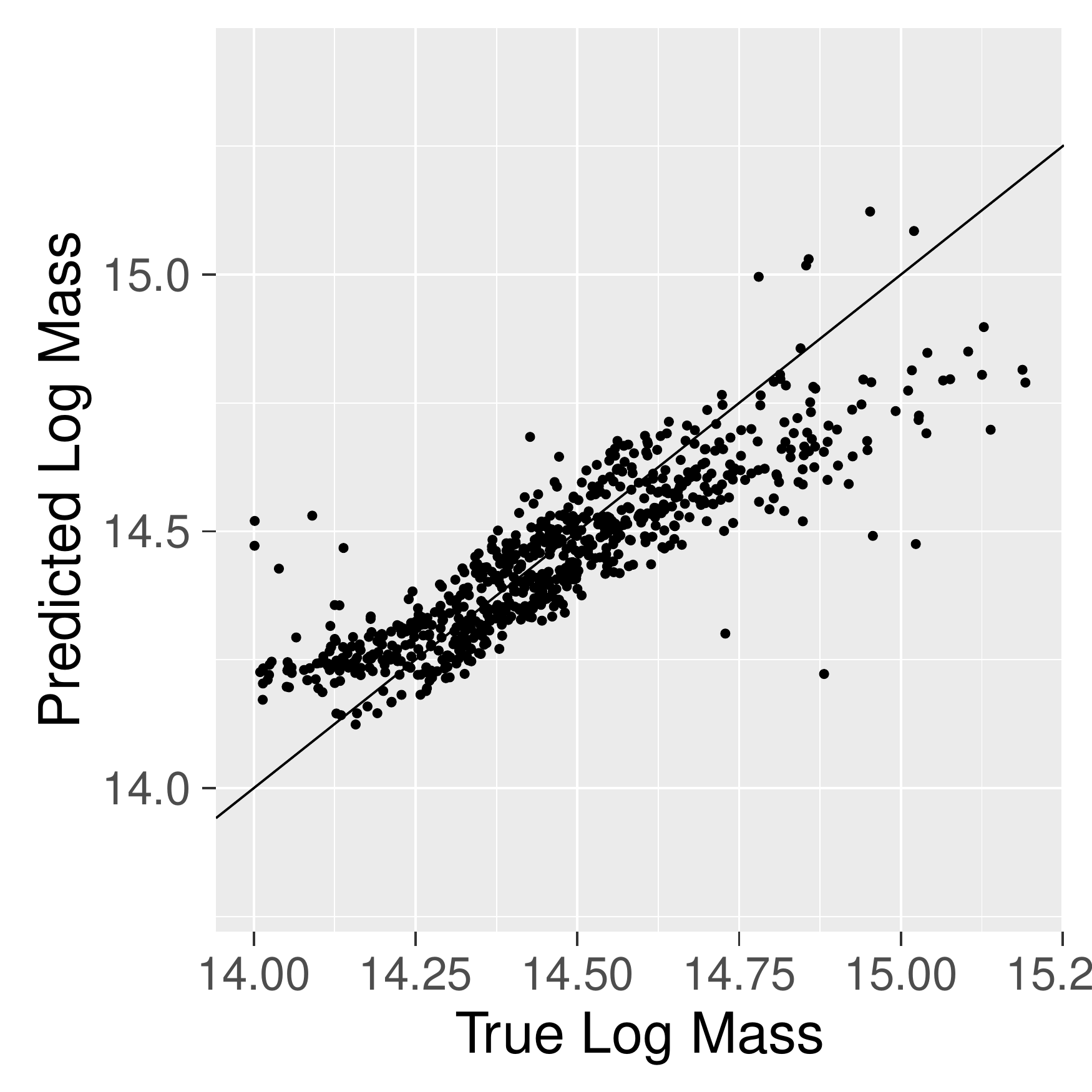}} 					
	\vspace{-3mm}
	\caption{\footnotesize   {\em Top left}: Estimated probability distributions of the log mass (``Response'') of 16 randomly chosen clusters; the vertical lines show the true values, and the red curves are computed using FlexCode$_W$-SDM. Many of these densities are multimodal and asymmetric, indicating that standard prediction approaches may not accurately model the uncertainty in the mass estimates.
	  {\em Top right}: 95\% highest predictive density (HPD) regions for
	the same 16 clusters derived from the FlexCode$_W$-SDM estimates; the dots show the true values.
	{\em Bottom}: Coverage plots of the density estimates from FlexCode-SDM and FlexCode$_W$-SDM for the entire mock cluster catalog. The plots show that these density estimates fit the observed data well.}
	\label{fig::clusterCoverage}
\end{figure}

					\begin{figure}[H]
						\centering
						\subfloat[FlexCode$_W$-SDM ]{  \includegraphics[page=8,scale=0.29]{SDMCDEWavelets.pdf}} \hspace{2mm}
						\subfloat[FlexCode$_W$-SDM ]{ \includegraphics[page=9,scale=0.29]{SDMCDEWavelets.pdf}} \hspace{2mm}
						\subfloat[SDM Regression]{  \includegraphics[page=1,scale=0.27]{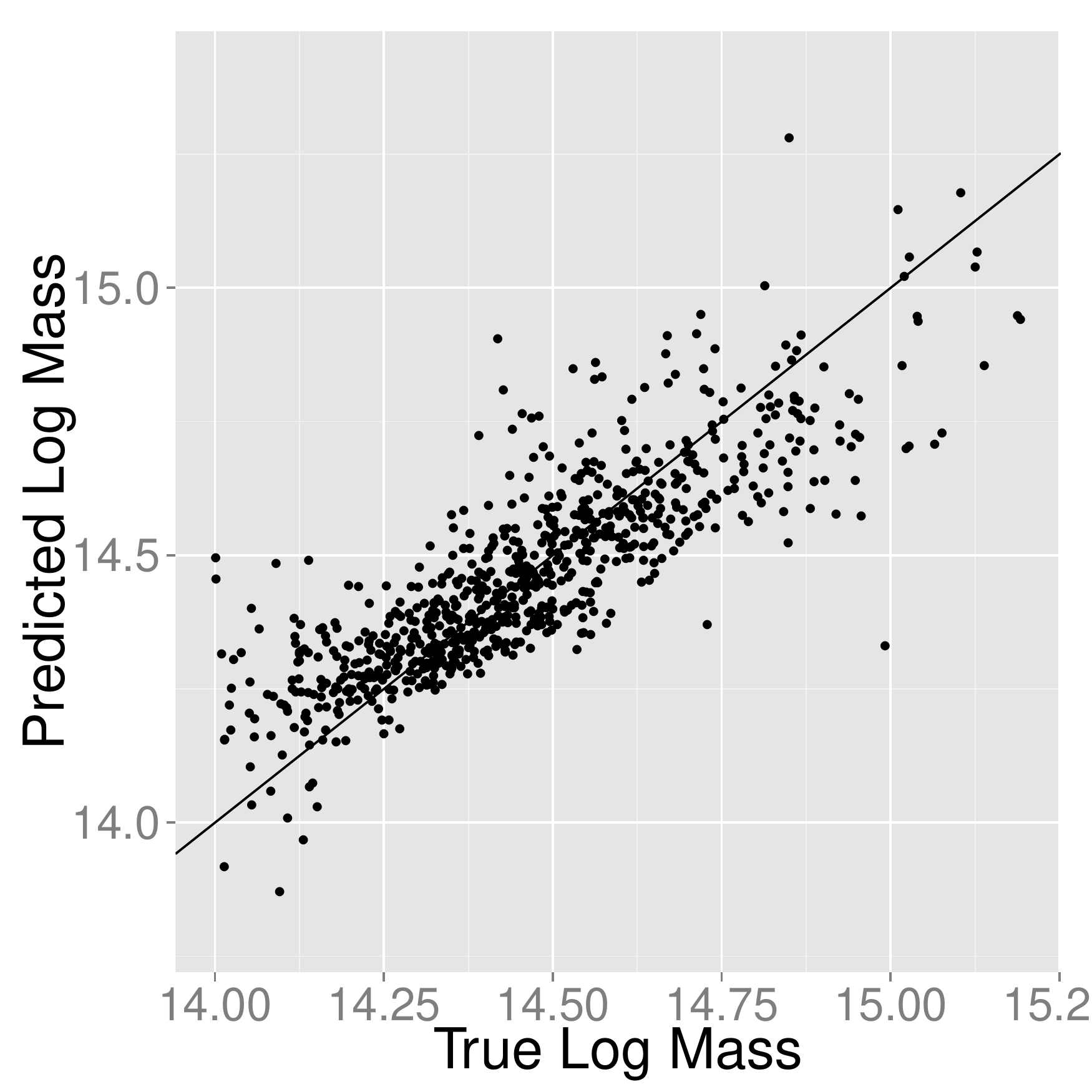}}
						\vspace{-3mm}
						\caption{\footnotesize {\em Left:} 
 Scatter plot of the predicted versus the true log masses for FlexCode$_W$-SDM when taking the conditional mean  $\widehat{\E}[Z|p]:=\int \! z \widehat{f}(z|p)dz$ (``FlexCode$_W$-SDM Mean''). The red and blue dots denote clusters with unimodal and multimodal mass densities, respectively. 		
						 {\em Center}: Boxplot of the the absolute fractional mass error $|\varepsilon|$  for the two populations. These results again indicate that much of the scatter in the mass error estimates are due to multimodal densities. {\em Right}: For comparison, we include a scatterplot of the predicted versus the true log mass masses for SDM regression as in \cite{ntampaka2015machine}, where we cannot extract this information.}
						\label{fig::clusterMultimodal}
					\end{figure}
					
	Finally, we notice that both the mean and the mode of FlexCode-SDM as well as FlexCode$_W$-SDM  densities improve upon plain SDM regression. Table \ref{tab::clusterPoint}  compares the fractional mass error distributions of the predictions.  By taking the mode of the FlexCode density we reduce the  $\varepsilon$ 68\% scatter\footnote{The  $\varepsilon$ 68\% scatter, $\Delta \varepsilon$, is the 68\% quantile of the distribution of $|\varepsilon|$} 
	 from $\Delta \varepsilon \approx 0.24$ for standard SDM down to a width of $\approx  0.15$ for FlexCode-SDM and of $\approx  0.17$ for FlexCode$_W$-SDM with a mode estimator.  

			  \begin{table}[H]
				\caption{Performance of different methods}
			  	\begin{center}
			  			  		\resizebox{\textwidth}{!}{%
			  		\begin{tabular}{l|*{3}{c}} &Mean fractional error &Median fractional error & 68\% scatter fractional error \\ \hline
			  		 \textit{SDM Regression} &0.052 &-0.004&0.244\\
			  		 \textit{FlexCode-SDM Mean}&0.012 &-0.036&-0.228\\
			  		 \textit{FlexCode-SDM Mode} &0.003&-0.025 &\textbf{0.152}\\
			  		 \textit{FlexCode$_W$-SDM Mean}&  -0.003 & -0.046 & 0.210 \\
			  		  \textit{FlexCode$_W$-SDM Mode} & $5.6*10^{-5}$ & -0.019  & \textbf{0.168}
			  		 
			  		\end{tabular}}
			  		\label{tab::clusterPoint}
			  	\end{center}
			  \end{table}

\comment{		
			\begin{figure}[H]
				\centering
				\subfloat[SDM Regression]{  \includegraphics[page=1,scale=0.3]{SDMCDEGaussSDMRegCodeMichelle.pdf}}
				\subfloat[FlexCode-SDM Mean]{ \includegraphics[page=1,scale=0.3]{SDMCDEGaussSDMFlexCodeMichelle.pdf}}
				\subfloat[FlexCode-SDM Mode]{ \includegraphics[page=10,scale=0.3]{SDMCDEGaussSDMFlexCodeMichelle.pdf}
				}
				\caption{\footnotesize  Left:  Predicted log masses versus true log mass masses when using {\bf (a)} SDM regression				as in \cite{ntampaka2015machine}), {\bf (b)} a mean estimator derived from FlexCode-SDM densities, and {\bf (c)} a mode estimator derived from the FlexCode densities. 
Both the FlexCode mean and mode estimators improve upon the SDM mass predictions. Moreover, with FlexCode one can better model the uncertainty in the mass predictions (Figure \ref{fig::clusterCoverage}, top left) and compute more informative HPD  regions (Figure \ref{fig::clusterCoverage}, top right).
}
				\label{fig::cluster2}
			\end{figure}
}

	 To summarize: 
	FlexCode extends SDM to conditional density estimation on distributions, and the estimated densities produce better point estimates of cluster masses. The real advantage with FlexCode, however, is that we can more accurately quantify the uncertainty in the predictions and potentially improve inference for outliers or cases that are not well described by one-number summaries. For example, 
	we can use the estimated densities to construct more informative highest predictive density (HPD) regions of the cluster mass, i.e., regions of the form
	$\{z:\widehat{f}(z|\x)\geq K\}$, where $K$ is chosen in such a way that the regions have the desired coverage level (e.g., 95\%). The top  panels of Figure~\ref{fig::clusterCoverage} shows some examples of multimodal densities and their 95\% HPD regions.  In many cases, returning {\em a predictive region for the cluster mass} is a better alternative to just taking the mean or mode of the density. 
	The coverage plot in the bottom right panel also indicates that the empirical coverage of these regions is indeed close to 95\%.

\section{Theory}
\label{sec::theory}


In this section, we derive bounds and rates for FlexCode; that is, the conditional density estimator in Eq.~\ref{eq:cdeEst}.
We use the notation $\widehat{f}_I(z|\x)$ to indicate its dependence on the cutoff $I$.  

We assume that 
$f$ belongs to a set of functions which are not too ``wiggly''. 
For every $ s>\frac{1}{2}$ and $0<c<\infty $, let $W_{\phi}(s,c)=\{f \!= \!\sum_{i\geq 1} \theta_i \phi_i \!:  \! \sum_{i\geq 1} a_i^2 \theta^2_i \leq c^2 \}$, where $a_i \! \sim \! (\pi i)^s$, denote the Sobolev space. For the Fourier basis
 $\{\phi_i\}_i$, this is the standard definition of Sobolev space \citep{wasserman}; it is  the
 space of functions that have their $s$-th weak derivative bounded by $c^2$ and integrable in $\mathcal{L}^2(\Re)$.
We enforce smoothness in the $z$-direction by requiring $f(z|\x)$ to be in a Sobolev space for all $\vec{x}$.
This is formally stated as Assumption~\ref{assump-sobolevZ}, where $\beta$
and $C$ are used to link the Sobolev spaces at different $x$.

\begin{Assumption} [Smoothness in $z$ direction]
 \label{assump-sobolevZ} $\forall \vec{x} \! \in \! \mathcal{X}$, 
$f(z|\x) \! \in \! W_{\phi}(s_\vec{x},c_\vec{x}),$ 
 where $f(z|\x)$ is viewed as a function of $z$, and $s_\vec{x}$ and $c_\vec{x}$ are such that 
$\inf_\vec{x} s_\vec{x}\overset{\mbox{\tiny{def}}}{=}\beta>\frac{1}{2}$ and 
$\int_\mathcal{X} c_\vec{x}^2d\vec{x}  \overset{\mbox{\tiny{def}}}{=} C <\infty$.
\end{Assumption}


We also assume that each  
   function $\beta_i(\x)$ is estimated using a regression method with convergence rate $O(n^{-2\alpha/(2\alpha+d)})$, where typically $\alpha$ is a parameter related to the smoothness of the $\beta_i(\x)$ function, and $d$ is either the  
     %
   number of relevant covariates or the intrinsic dimension of $\x$.
 In other words, we assume that each regression \emph{adapts}
to  sparse structure in the data. 
 This is formally stated as Assumption~\ref{assump-regression}.

\begin{Assumption} [Regression convergence]
 \label{assump-regression} For every $i \in \mathbb{N}$, there exists 
 some $d \in \mathbb{N}$ and $\alpha>0$ such that
 $$ \E\left[\int \left(\widehat{\beta}_i(\x)-\beta_i(\x) \right)^2d\x \right]=O(n^{-2\alpha/(2\alpha+d)})$$
\end{Assumption}

Note that the smoothness parameter $\alpha$ must be the same for every $i \in \mathbb{N}$.
 Typically this assumption will hold because in many applications
it is reasonable to assume that \
(i) if $\x_1$ is close to $\x_2$, then $f(z|\x_1)$ is also close to $f(z|\x_2)$ for every $z \in \Re$ (in other words,
$f(z|\x)$ is smooth as a function of $\x$),
 and (ii) there is some structure in x (e.g., low intrinsic dimensionality) or in the relationship between x and z (e.g., sparsity), which the regression method for estimating $\beta_i$ takes advantage of. Here are some examples where Assumption~\ref{assump-regression} holds:
\begin{enumerate}[label=(E{\arabic*})]
\item $\widehat{\beta}_i$ is the k-nearest neighbors estimator  \citep{kpotufe2011k}, $d$ is the intrinsic dimension of the covariate space
and,  for every $z \in [0,1]$,
   $f(z|\x)$ is $L$-Lipschitz in $\x$ (in this case, $\alpha=1$);
\item  $\widehat{\beta}_i$ is a local polynomial regression \citep{Bickel:Li:2007}, $d$ is the intrinsic dimension of the covariate space and, for every $z \in [0,1]$, $f(z|\x)$ is $\alpha$ times differentiable with all partial derivatives up to order $\alpha$ in $\x$ are bounded; 
\item  $\widehat{\beta}_i$ is the Rodeo estimator  \citep{lafferty2008rodeo}, $d$ is the number of variables that affect the distribution of $Z$ and, for every  $z$,  all partial derivatives of $f(z|\x)$ 
 up to fourth order 
  in $\x$ are bounded  (in this case, $\alpha=2$); 
\item  $\widehat{\beta}_i$ is the regression estimator from \citet{bertin2008selection}, $d$ is the number of variables that affect the distribution of $Z$,\footnote{That is, there exists a subset $R \subseteq \{1,
	\ldots,D\}$ with $|R|=d$ such that  $f(z|\x)=f(z|(x_i)_{i \in R})$} and, for every $z \in [0,1]$,  $f(z|\x)$
is $\alpha$-H\"olderian in $\x$; 
\item  $\widehat{\beta}_i$ is the Spectral series regression  \citep{LeeIzbickiReg}, $d$ is the intrinsic dimension of the covariate space and, 
 for every $z \in [0,1]$,  $f(z|\x)$ is smooth with respect to $P_X$ according to $\int ||\nabla f(z|\x)||^2dS(\x)<\infty$ for a smoothed version $S(\x)$ of $f$ (in which case $\alpha=1$);
\item  $\widehat{\beta}_i$ is a local linear functional regression \citep{baillo2009local}, the predictor $X$
is a function talking values in $\mathcal{L}^2([0,1])$, $X$
is fractal of order $\tau$, and, for every $z \in [0,1]$, $f(z|\x)$ is  
twice  differentiable with a continuous second derivative (yielding rates with $\alpha=2$ and $d=\tau$.)
\end{enumerate}

In essence, Assumption 2  holds for examples E1-E6 because smoothness in $f(z|\x)$ (seen as a function of $\x$) implies smoothness of the $\beta_i(\x)$ functions in FlexCode. We refer to Appendix A1  for details and proofs.
  (See also, e.g., \citet{yang2015minimax} and references therein for other adaptive regression methods.)
We also note that the converge rates may vary depending on the choice of basis.

Under Assumptions 1-2, we bound the bias and variance of $\widehat{f}_I(z|\x)$ separately.

\begin{Lemma} [Bias Bound] Under Assumption \ref{assump-sobolevZ},
$$ \sum_{i>I} \int\left(\beta_i(\x) \right)^2d\x=O(I^{-2\beta}) $$
 \label{biasBound}
\end{Lemma}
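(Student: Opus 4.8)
The plan is to exploit the fact that, for each fixed $\x$, the sequence $(\beta_i(\x))_{i\ge 1}$ is exactly the coefficient sequence of $f(\cdot|\x)$ in the orthonormal basis $\{\phi_i\}$, so that Assumption~\ref{assump-sobolevZ} gives us direct control of the \emph{weighted} sum $\sum_{i\ge 1} a_i^2\beta_i^2(\x)$. From this I would extract a pointwise-in-$\x$ bound on the tail $\sum_{i>I}\beta_i^2(\x)$ and then integrate over $\mathcal{X}$; the interchange of sum and integral is harmless because all terms are nonnegative (Tonelli).

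Concretely, first fix $\x\in\mathcal{X}$. By Eq.~\ref{eq::coeff}, $\beta_i(\x)=\langle f(\cdot|\x),\phi_i\rangle$, and since $f(\cdot|\x)\in W_\phi(s_\x,c_\x)$ we have $\sum_{i\ge1} a_i^2\beta_i^2(\x)\le c_\x^2$ with $a_i\sim(\pi i)^{s_\x}$. Because the weight sequence $(a_i)$ is increasing, $a_{I+1}^2$ is a lower bound for $a_i^2$ for every $i>I$, so
\[
\sum_{i>I}\beta_i^2(\x)=\sum_{i>I}\frac{a_i^2\beta_i^2(\x)}{a_i^2}\le \frac{1}{a_{I+1}^2}\sum_{i>I}a_i^2\beta_i^2(\x)\le\frac{c_\x^2}{a_{I+1}^2}.
\]
Now I use $s_\x\ge\inf_{\x}s_\x=\beta>\tfrac12$: for $I$ large enough that $\pi(I+1)\ge1$ one has $a_{I+1}^2\ge \kappa\,(\pi(I+1))^{2s_\x}\ge\kappa\,(\pi(I+1))^{2\beta}$ for a fixed constant $\kappa>0$ coming from the $\sim$ relation, hence $\sum_{i>I}\beta_i^2(\x)\le \kappa^{-1}c_\x^2\,(\pi(I+1))^{-2\beta}$.

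Finally I would integrate this bound over $\mathcal{X}$, interchanging the (nonnegative) sum and the integral, to get
\[
\sum_{i>I}\int\beta_i^2(\x)\,d\x=\int\sum_{i>I}\beta_i^2(\x)\,d\x\le \kappa^{-1}(\pi(I+1))^{-2\beta}\int_{\mathcal X}c_\x^2\,d\x=\frac{C}{\kappa\,(\pi(I+1))^{2\beta}}=O(I^{-2\beta}),
\]
using $\int_{\mathcal X}c_\x^2\,d\x=C<\infty$ from Assumption~\ref{assump-sobolevZ}. The only delicate point is the uniformity in $\x$ of the constant $\kappa$ hidden in $a_i\sim(\pi i)^{s_\x}$ — i.e.\ that one $\kappa$ works for all $\x$ and all $i$ past a fixed index — which holds because the weights $a_i$ are an explicit function of $i$ and $s_\x$ and $s_\x$ is bounded below by $\beta$; everything else is the standard ellipsoid tail estimate, so I do not expect real obstacles here.
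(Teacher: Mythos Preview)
Your proposal is correct and follows essentially the same approach as the paper: fix $\x$, use the Sobolev ellipsoid condition to bound the tail $\sum_{i>I}\beta_i^2(\x)$ by $c_\x^2/I^{2s_\x}\le c_\x^2/I^{2\beta}$, then integrate over $\mathcal{X}$ using $\int c_\x^2\,d\x=C<\infty$. You are simply more explicit than the paper about the constant $\kappa$ in the $\sim$ relation and about invoking Tonelli for the sum/integral interchange.
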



\begin{Lemma} [Variance Bound] From Assumption \ref{assump-regression}, it follows that
$$\sum_{i=1}^I \E\left[\int \left(\widehat{\beta}_i(\x)-\beta_i(\x) \right)^2d\x \right]=IO\left(n^{-2\alpha/(2\alpha+d)}\right)$$
 \label{varianceBound}
\end{Lemma}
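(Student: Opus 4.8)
The plan is to derive the bound directly from Assumption~\ref{assump-regression} together with the linearity of expectation and of integration over the finite index set $\{1,\dots,I\}$; there is essentially nothing to compute. The one point that needs to be pinned down first is that the $\BigO{\cdot}$ appearing in Assumption~\ref{assump-regression} is to be read \emph{uniformly in $i$}: the discussion preceding the assumption already requires the smoothness parameter $\alpha$ (and, implicitly, $d$) to be the same for every $i$, and the natural companion requirement is that there exist a single constant $C<\infty$ and a single threshold $N$ with
$$\E\!\left[\int \left(\widehat{\beta}_i(\x)-\beta_i(\x)\right)^2 d\x\right]\le C\, n^{-2\alpha/(2\alpha+d)}\qquad\text{for all } i\in\mathbb{N}\text{ and all } n\ge N.$$

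Granting this, I would fix $n\ge N$ and simply sum: by linearity of the expectation and then of the integral,
$$\sum_{i=1}^I \E\!\left[\int \left(\widehat{\beta}_i(\x)-\beta_i(\x)\right)^2 d\x\right]=\E\!\left[\int \sum_{i=1}^I\left(\widehat{\beta}_i(\x)-\beta_i(\x)\right)^2 d\x\right]\le \sum_{i=1}^I C\, n^{-2\alpha/(2\alpha+d)}=I\,C\, n^{-2\alpha/(2\alpha+d)}.$$
This is the claimed $I\cdot\BigO{n^{-2\alpha/(2\alpha+d)}}$, and it is precisely the form that will combine with the $\BigO{I^{-2\beta}}$ bias term of Lemma~\ref{biasBound} to yield an overall rate once one optimizes over the cutoff $I$.

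The only genuine obstacle is justifying the uniformity in $i$ highlighted above; everything else is routine. To establish it I would return to Assumption~\ref{assump-sobolevZ} and to the examples E1--E6: in each of those settings the constant governing the regression rate for $\beta_i(\x)=\E[\phi_i(Z)|\x]$ is controlled by a Lipschitz, H\"older, or derivative-bound constant of the map $\x\mapsto f(z|\x)$, and the bound $\int_\mathcal{X} c_\x^2\,d\x=C<\infty$ from Assumption~\ref{assump-sobolevZ}, together with the (uniform) boundedness of the basis functions $\phi_i$, lets one majorize all of these $i$-dependent constants by a single quantity. This is exactly what Appendix~A1 works out, so I would invoke it and conclude. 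If one were unwilling to assume the uniform reading, the statement would have to be weakened to $\sum_{i\le I}\E[\cdots]\le\bigl(\sum_{i\le I} C_i\bigr)\,n^{-2\alpha/(2\alpha+d)}$ with $i$-dependent $C_i$; making the uniform version explicit is therefore the substantive content of the argument.
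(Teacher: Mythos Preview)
Your proposal is correct and matches the paper's approach: the paper does not even write out a separate proof of this lemma, treating it as an immediate consequence of summing the $I$ identical bounds from Assumption~\ref{assump-regression}. Your explicit identification of the uniformity-in-$i$ requirement (and its justification via the smoothness-transfer lemmas in Appendix~A1) makes precise a point the paper leaves implicit, but the underlying argument is the same trivial summation.
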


Our main result follows.

\begin{thm} 
	\label{thm::main}
	Under Assumptions \ref{assump-sobolevZ} and \ref{assump-regression}, an upper bound on the risk of the CDE 
from Equation \ref{eq:cdeEst}  is  
 $$\E\left[\iint\left(\widehat{f}_I(z|\x)-f(z|\x)\right)^2dzd\x\right] \leq  IO\left(n^{-2\alpha/(2\alpha+d)}\right)+O(I^{-2\beta})$$
\end{thm}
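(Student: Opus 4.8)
The plan is a standard bias--variance decomposition carried out in the orthonormal basis $\{\phi_i\}_i$, with Lemmas~\ref{biasBound} and~\ref{varianceBound} supplying the two pieces. First I would fix $\x$ and expand the integrand pointwise in $\x$. Writing $f(z|\x)=\sum_{i\geq 1}\beta_i(\x)\phi_i(z)$, which is legitimate because Assumption~\ref{assump-sobolevZ} forces $f(\cdot|\x)\in\mathcal{L}^2(\Re)$, and $\widehat{f}_I(z|\x)=\sum_{i=1}^I\widehat{\beta}_i(\x)\phi_i(z)$, the difference splits into two pieces supported on disjoint index sets,
\[
\widehat{f}_I(z|\x)-f(z|\x)=\sum_{i=1}^I\bigl(\widehat{\beta}_i(\x)-\beta_i(\x)\bigr)\phi_i(z)\;-\;\sum_{i>I}\beta_i(\x)\phi_i(z).
\]
Squaring, integrating over $z$, and using orthonormality of $\{\phi_i\}_i$ (Parseval), the cross term vanishes and I obtain the pointwise-in-$\x$ identity
\[
\int\bigl(\widehat{f}_I(z|\x)-f(z|\x)\bigr)^2dz=\sum_{i=1}^I\bigl(\widehat{\beta}_i(\x)-\beta_i(\x)\bigr)^2+\sum_{i>I}\beta_i(\x)^2.
\]

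Next I would integrate this identity over $\x$ and take the expectation over the training sample. Since every term is nonnegative, Tonelli's theorem lets $\E$ and $\int d\x$ commute with both sums, giving
\[
\E\left[\iint\bigl(\widehat{f}_I(z|\x)-f(z|\x)\bigr)^2dzd\x\right]=\sum_{i=1}^I\E\left[\int\bigl(\widehat{\beta}_i(\x)-\beta_i(\x)\bigr)^2d\x\right]+\sum_{i>I}\int\beta_i(\x)^2d\x.
\]
The first sum is precisely the quantity controlled by Lemma~\ref{varianceBound}, hence it is $I\,O\bigl(n^{-2\alpha/(2\alpha+d)}\bigr)$; the second sum is the tail controlled by Lemma~\ref{biasBound}, hence it is $O(I^{-2\beta})$. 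Adding the two bounds yields the asserted upper bound on the risk.

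The only genuine subtlety is justifying the two interchanges --- dropping the cross term, and swapping $\E$, $\int d\x$ and the infinite sum $\sum_{i>I}$. Both are handled by nonnegativity together with the observation that $\widehat{f}_I$ is a \emph{finite} basis expansion, so term-by-term orthogonality applies with no convergence concerns on the estimator side; on the truth side one has $\sum_{i>I}\beta_i(\x)^2\le\|f(\cdot|\x)\|_{\mathcal{L}^2}^2<\infty$ for each $\x$, and the $\x$-integral of this tail is finite by Lemma~\ref{biasBound} (equivalently, by $C<\infty$ in Assumption~\ref{assump-sobolevZ}), which is exactly what makes the decomposition and the subsequent bounding rigorous. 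Everything beyond this is routine bookkeeping.
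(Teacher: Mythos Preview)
Your proof is correct and follows essentially the same route as the paper: expand the difference in the orthonormal basis, use orthonormality (Parseval) to kill the cross terms and obtain the decomposition $\sum_{i\le I}\int(\widehat{\beta}_i-\beta_i)^2d\x+\sum_{i>I}\int\beta_i^2\,d\x$, then invoke Lemmas~\ref{varianceBound} and~\ref{biasBound}. Your version is in fact slightly more careful than the paper's in justifying the interchanges via Tonelli and the finiteness coming from Assumption~\ref{assump-sobolevZ}.
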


See Appendix~\ref{sec::theoryAppendix} for proofs.


\begin{Cor} Under Assumptions \ref{assump-sobolevZ} and \ref{assump-regression}, it is optimal to take 
$$I \asymp n^{\frac{2\alpha}{(2\alpha+d)(2\beta+1)}},$$
which yields the rate 
$$O\left(n^{-\frac{2\beta}{2\beta+d\frac{2\beta+1}{2\alpha}+1}}\right)$$
for the estimator in Equation \ref{eq:cdeEst}.
\end{Cor}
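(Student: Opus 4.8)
The plan is to pick the cutoff $I$ so as to minimize the upper bound supplied by Theorem~\ref{thm::main}. Abbreviating $\epsilon_n := n^{-2\alpha/(2\alpha+d)}$, that bound reads $R(I) \le c_1 I\epsilon_n + c_2 I^{-2\beta}$ for some constants $c_1,c_2>0$ absorbing the hidden constants in the two $O(\cdot)$ terms (valid once $n$ is large enough). The first summand, the accumulated estimation error of the $I$ regressions, is increasing in $I$; the second, the truncation bias, is decreasing in $I$; so the rate-optimal $I$ is the one that balances the two.

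First I would treat $I$ as a continuous variable and set the derivative of $c_1 I\epsilon_n + c_2 I^{-2\beta}$ with respect to $I$, namely $c_1\epsilon_n - 2\beta c_2 I^{-2\beta-1}$, equal to zero. This gives $I^\star \asymp \epsilon_n^{-1/(2\beta+1)} = n^{\frac{2\alpha}{(2\alpha+d)(2\beta+1)}}$. Since this exponent is strictly positive, $I^\star\to\infty$, so replacing $I^\star$ by the nearest integer changes each of the two terms by at most a bounded factor and therefore does not affect the rate; this integer-rounding is the only mildly delicate point, and it is entirely routine. Substituting $I\asymp\epsilon_n^{-1/(2\beta+1)}$ back in, the estimation term becomes $I\epsilon_n \asymp \epsilon_n^{1-1/(2\beta+1)}=\epsilon_n^{2\beta/(2\beta+1)}$ and the bias term becomes $I^{-2\beta}\asymp\epsilon_n^{2\beta/(2\beta+1)}$; the two match, so $R(I)=O\bigl(\epsilon_n^{2\beta/(2\beta+1)}\bigr)=O\bigl(n^{-\frac{2\alpha}{2\alpha+d}\cdot\frac{2\beta}{2\beta+1}}\bigr)$.

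The last step I would carry out is purely algebraic: the exponent $\frac{4\alpha\beta}{(2\alpha+d)(2\beta+1)}$, after dividing numerator and denominator by $2\alpha$ and using $(2\alpha+d)(2\beta+1)=4\alpha\beta+2\alpha+2d\beta+d$, equals $\frac{2\beta}{2\beta+d\frac{2\beta+1}{2\alpha}+1}$, which is exactly the stated rate. I do not anticipate any genuine obstacle here: all the analytic content already lives in Lemmas~\ref{biasBound}--\ref{varianceBound} and Theorem~\ref{thm::main}, and the Corollary is the textbook bias--variance balancing plus bookkeeping. One might additionally remark that the optimal $I$ depends on the unknown smoothness indices $\alpha$ and $\beta$, which is precisely why the data-driven selection of $I$ in Algorithm~\ref{algorithmCondReg} is used in practice --- but that is a comment rather than part of the proof.
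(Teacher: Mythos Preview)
Your argument is correct and is exactly the standard bias--variance balancing the paper intends; in fact the paper does not spell out a proof of the Corollary at all, leaving it as an immediate consequence of Theorem~\ref{thm::main}, so your derivation fills that in precisely as expected. The algebraic check that $\frac{4\alpha\beta}{(2\alpha+d)(2\beta+1)}=\frac{2\beta}{2\beta+d\frac{2\beta+1}{2\alpha}+1}$ is also correct.
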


To summarize: The convergence rate of FlexCode only depends on $d$, the ``true'' dimension of the problem.
Moreover, the rate is near minimax with regards to $d$:
In the isotropic setting where $\x$ and $z$ have the same degree of smoothness, i.e., $\alpha=\beta$, the rate becomes
$$O\left(n^{-\frac{2\alpha}{2\alpha+d\frac{2\alpha+1}{2\alpha}+1}}\right),$$
which is close
to the minimax rate $O\left(n^{-\frac{2\alpha}{(2\alpha+1+d)}}\right)$ 
of a conditional density estimator with $d$ covariates \citep{IzbickiLeeCDE}. The difference is the multiplicative factor $\frac{2\alpha+1}{2\alpha}$, which gets closer to 1, the smoother $f$ is. 
Although FlexCode's rate is 
slightly slower than the optimal rate,\footnote{The reason may be that that we optimize the tuning parameters of each regression $\widehat{\beta}_i(\x)$ so as to have optimal regression estimates (Assumption \ref{assump-regression}) rather than an optimal estimate of $f(z|\x)$.} the estimator is still considerably faster than
$O\left(n^{-\frac{2\alpha}{(2\alpha+1+D)}}\right)$, the usual minimax rate of a nonparametric conditional density estimator in $\mathbb{R}^D$.
In other words, even though there are $D$ covariates, our estimator can overcome the curse-of-dimensionality 
and behave as if there are only $d \ll D$
covariates.


Finally, note that although we here restrict our examples to cases where
either (i) the intrinsic dimension is small or (ii) several
covariates are irrelevant, the theory we
develop can easily be applied to other
settings for high-dimensional regression estimation. For instance, 
\citet{yang2015minimax} introduce a third type of sparse structure: in their paper, 
$r$ may depend on all $D$ covariates, but admits an additive structure 
$r = \sum_{s=1}^k r_s,$ where each component function
$r_s$ depends on a small number $d_s$  of predictors.
The authors then show that an additive Gaussian process regression
achieves good rates of convergence in such a setting. It follows that
FlexCode can achieve good rates under a similar additive setting 
$f(z|\x) = \sum_{s=1}^k f_s(z|\x)$ 
if one estimates the expansion coefficients via additive Gaussian process regression.

\section{Conclusions}
\label{sec::concl}
 With FlexCode, one can use {\em any} regression methodology to estimate a conditional density. In other words, FlexCode is a powerful inference and data analysis tool that converts prediction to the problem of understanding the role of covariates in {\em explaining} the outcome, with meaningful measures of uncertainty attached to the predictions. Because of the flexibility of the method, one can construct estimators for a range of different scenarios with complex, high-dimensional data. In the paper, we emphasized examples where several redundant covariates are correlated, and examples where only a small number of covariates
influence the distribution of the response. We showed that
FlexCode has good theoretical properties and empirical performance comparable to 
state-of-the-art approaches in a wide variety of settings, including cases with mixed data types and functional data.

In the paper, we restricted most analyses to Fourier bases in the outcome space, but for distributions that are inhomogeneous with respect to the response variable, one may benefit from nonlinear approximations in a wavelet basis \citep{mallat1999wavelet}. We will explore this aspect further in a separate paper, as well as extensions of FlexCode to approximate likelihood computation  for structured data and complex simulation models. Another interesting direction for future work is variable selection via FlexCode. 
For example, FlexCode-Forest and FlexCode-SAM currently perform a separate variable selection for each coefficient $\beta_i(\x)$
in FlexCode (Eq. \ref{eq::coeff}), but one can unify these results to define a common support for the final FlexCode estimate.
\vspace{2mm}

\comment{In this work we propose a promising new conditional density estimator
which takes advantage of the large literature on regression function estimation, FlexCode.
The estimator is able to overcome the curse-of-dimensionality
in several scenarios. We emphasized two example:
(i) when the covariates are very redundant, and (ii)
when only a small number of covariates
influence the distribution of the response.
We showed that
FlexCode has good theoretical properties, and empirical performance comparable to 
state-of-the-art approaches in a wide variety of settings,
including mixed data types and functional data examples. Moreover, it is generally faster than 
many of such approaches. Finally, we argued
that it is extremely flexible. 

FlexCode is
a very powerful tool: it
extends \emph{any} regression estimator to a conditional density estimator, i.e., given any regression
methodology, our framework allows one to use the same methodology to estimate a conditional density. 
}



{\small \noindent \textbf{Acknowledgments.}
	We thank Michelle Ntampaka and Hy Trac for sharing the data for the galaxy cluster mass example, and Peter E. Freeman for his help with the photometric redshift and galaxy cluster mass studies. This work was partially supported by 
	\emph{Funda\c{c}\~ao de Amparo \`a Pesquisa do Estado de S\~ao Paulo} (2014/25302-2), NSF DMS-1520786, and the National Institute of Mental Health grant R37MH057881.}\\

\bibliographystyle{plainnat}

{ \footnotesize 
	\bibliography{paper_arxiv}
}

\newpage

\appendix

\textbf{\huge Appendix}

\section{Diagnostic Test of  Conditional Density Estimates}\label{sec::Appendix_diagnostics}


To assess how well a model actually fits the observed data, we use coverage plots that are based on {\em Highest-Predictive Density (HPD) regions}.
\comment{
\begin{enumerate}
 \item(\textbf{Q-Q Plot}) For every $c$ in a grid of values on $[0,1]$ and for every observation $i$ in the test sample, compute $Q_i^c = \widehat{F}_{z|\vec{x}_i}^{-1}(c)$. Define
 $\widehat{c}=\frac{1}{n}\sum_{i=1}^n\I(Z_i \leq Q_i^c).$ We plot the values of $\widehat{c}$ against the corresponding values of $c$. If the distributions $\widehat{F}_{z|\vec{x}}$ and $F_{z|\vec{x}}$ are similar, then the points in the Q-Q plot will approximately lie on the line $\widehat{c} = c$.

 \item(\textbf{P-value})  For every test data point $i$, let 
 $U_i=\widehat{F}_{z|\vec{x}_i}(Z_i).$
 If the data are really distributed according to $\widehat{F}_{z|\vec{x}}$, then 
 $U_1,\ldots,U_n \overset{\mbox{\tiny{iid}}}{\sim} Unif(0,1)$. Hence, we compute the p-value for a Kolmogorov-Smirnoff test that compares the distributions of these statistics to the uniform distribution. 
 
 \item \textbf{Coverage Plots and HPD Regions.}
 }
 
 Let $\widehat{f}_{z|\vec{x}_i}$ denote the estimated conditional density function for $z$ given $\vec{x}_i$. For every $\alpha$ in a grid of values in  $[0,1]$  and for every data point $i$ in the test sample, we define a set $A_i$ such that 
 $\int_{A_i}\widehat{f}(z|\vec{x}_i)dz=\alpha.$ Here we choose the set $A_i$ with the smallest area: $A_i=\{z:f(z|\x_i)>t\}$ 
 where $t$ is such that  $\int_{A_i}\widehat{f}(z|\vec{x}_i)dz=\alpha$; i.e., $A_i$
 is a Highest Predictive Density region.

 Let 
 $\widehat{\alpha}_i=\frac{1}{n}\sum_{i=1}^n \I(Z_i \in A_i).$ 
 If $\widehat{f}_{z|\vec{x}}$ and the true density $f_{z|\vec{x}}$ are similar, then $\widehat{\alpha}_i \approx \alpha_i$.
 Hence, as a diagnostic tool, we graph $\widehat{\alpha}_i$ versus $\alpha_i$ for the test set, and assess how close these points are to the line $\widehat{\alpha} = \alpha$. For each $\alpha_i$, we also include a $95\%$ confidence interval
 based on a normal approximation to the binomial distribution.

\section{Additional Twitter Data}

	Here we consider 5000 geotagged tweets posted in July 2015 that 
	 include either the keyword  \emph{frio} or the keyword \emph{calor}; these words mean
	cold and hot in Spanish as well as in Portuguese. As in Sec.~\ref{sec::twitter}, the goal is to predict the latitude
	and longitude of a tweet, $\z$, based on its content $\x$.
	Using the same methodology (\emph{FlexCode-SAM}) as before, we estimate $f(\z|\x)$. 
	
	 Fig.~\ref{fig::twitterCalor} shows the results for three tweets. In the tweet corresponding to the left plot, the user mentions ``beach" and ``heat".
	Because (i) July is a summer month in the north hemisphere, (ii) the tweet is in Spanish, and (iii) it mentions ``beach",
	FlexCode automatically assigns high probability to the coast of Spain. For the example corresponding to the middle plot, on the other hand,
	the word ``beach" does not occur, but the tweet is in Spanish and it mentions hot weather. As a result,
	our density model assigns high probability to the interior of Spain. Our final example, corresponding to the right plot in the figure,
	is a tweet in Portuguese about cold weather. Our FlexCode model here assigns high probability to big cities in Brazil, which is consistent with July being a winter month in the south hemisphere.   
	We also notice that it in the winter rains a lot in Recife, the northernmost city that are colored red in the density plot. This is why FlexCode assigns a high probability to this location despite the city being much smaller than Sao Paulo and Rio de Janeiro.

	\begin{figure}[H]
	  \centering \hspace{-14mm}
	   \includegraphics[page=6,scale=0.35]{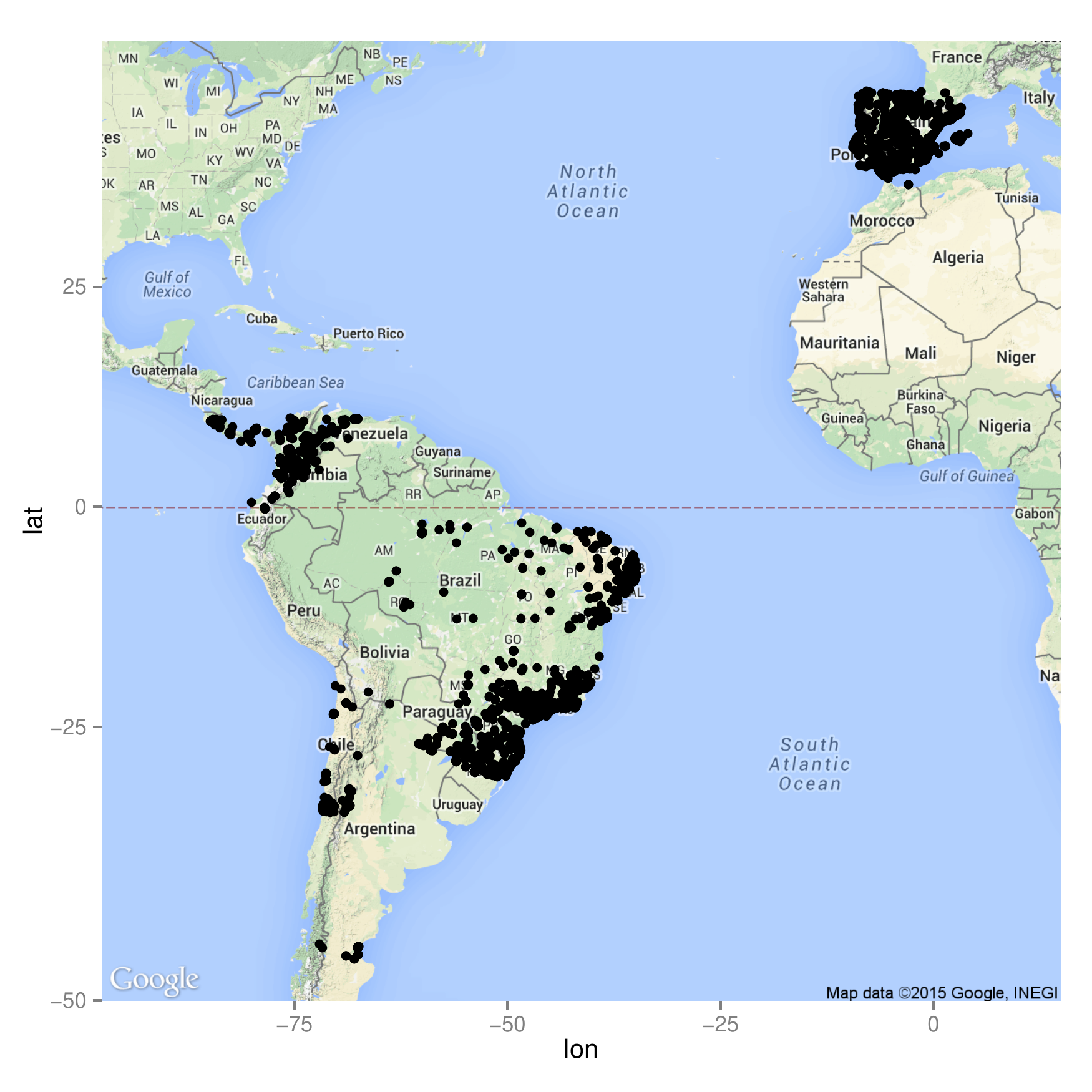}\hspace{-6mm}
	   	   \includegraphics[page=11,scale=0.35]{twitterDataTensor2Paper.pdf}\hspace{-6mm}
	   \includegraphics[page=21,scale=0.35]{twitterDataTensor2Paper.pdf}	   \hspace{-6mm}
	   \caption{\footnotesize Level sets of the estimated probability densities of the location of three tweets given their contents.
	   The black dots show the true location where each tweet was posted.
	   Left: \emph{Contra la ola de calor\ldots Un chapuzón en la playa tempranito y ahora\ldots Reclusión en casa\ldots} (Against the heat wave\ldots 
	   A dip in the beach very early and now\ldots  Confinement at home). Middle:  \emph{Combatiendo el calor \#verano \#lacuevadekrusty \#elmolar} (Fighting the heat \#summer \#lacuevadekrusty \#elmolar). Right: 
	   \emph{Domingo de chuva, frio gostoso é dia de: Fazer planilha do DVD kkkkk} (Rainy Sunday, pleasant cold is a day of: Making a DVD playlist lol).	}
	   \label{fig::twitterCalor}
	\end{figure}

	\section{Proofs and Additional Results}
	\label{sec::theoryAppendix}

	To prove that the estimators in examples E1-E6 in Sec.~\ref{sec::theory} satisfy Assumption~\ref{assump-regression}, we only need to show that smoothness in the conditional density  $f(z|\x)$ (seen as a function of $\x$) implies smoothness for each varying coefficient  $\beta_i(\x)$. 
Assumption~\ref{assump-regression} then follows directly from known convergence results for regression. For E3 and E4, note that if there exists a subset $R \subseteq \{1,
	\ldots,D\}$ with $|R|=d$ such that  $f(z|\x)=f(z|(x_i)_{i \in R})$
	(i.e., there are only $d$ relevant covariates), then $\beta_i(\x)=\beta_i((x_i)_{i \in R})$.

	Different estimators
	use different notions of smoothness. In~\citet{kpotufe2011k}, the authors show that k-NN regressors converge at rates the depend only on the intrinsic dimension of data if the target function is Lipschitz. 
	Hence, for example E1, we use the Lipschitz notion of smoothness:

	\begin{Lemma} Let $\{\phi_i\}_i$ be the Fourier basis. If, for every fixed $z \! \in \! \Re$, 
		$f(z|\x)$ is $L$-Lipschitz function, then $\beta_i(\x)$ is $\sqrt{2}L$-Lipschitz for all $i \in \mathbb{N}$.
		
		\label{lemma:lip}
	\end{Lemma}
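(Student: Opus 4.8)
The plan is to bound the increment $|\beta_i(\x_1)-\beta_i(\x_2)|$ directly from the integral representation \eqref{eq::coeff}, $\beta_i(\x)=\int \phi_i(z)f(z|\x)\,dz$, and then control the constant uniformly in $i$ using the boundedness of the Fourier basis. First I would subtract the two representations and move the absolute value inside the integral:
$$|\beta_i(\x_1)-\beta_i(\x_2)|=\left|\int \phi_i(z)\bigl(f(z|\x_1)-f(z|\x_2)\bigr)\,dz\right|\le \int |\phi_i(z)|\,\bigl|f(z|\x_1)-f(z|\x_2)\bigr|\,dz.$$
By hypothesis, for each fixed $z$ the map $\x\mapsto f(z|\x)$ is $L$-Lipschitz, and crucially the constant $L$ does not depend on $z$, so $|f(z|\x_1)-f(z|\x_2)|\le L\|\x_1-\x_2\|$ can be pulled out of the $z$-integral, giving $|\beta_i(\x_1)-\beta_i(\x_2)|\le L\|\x_1-\x_2\|\int |\phi_i(z)|\,dz$.

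It then remains to bound $\int |\phi_i(z)|\,dz$ uniformly in $i$. For the Fourier basis one has $\phi_1\equiv 1$ and $|\phi_{2i}(z)|,|\phi_{2i+1}(z)|\le \sqrt{2}$ for all $z$ and all $i$, so $\int_0^1 |\phi_i(z)|\,dz\le \sqrt{2}$; combining with the previous display yields $|\beta_i(\x_1)-\beta_i(\x_2)|\le \sqrt{2}\,L\,\|\x_1-\x_2\|$, which is the claim. (One could instead invoke Cauchy--Schwarz with $\|\phi_i\|_2=1$ to get the sharper constant $1$, but $\sqrt 2$ suffices and keeps the argument uniform across index $i$.)

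There is essentially no serious obstacle here; the only points worth stating carefully are that (i) the Lipschitz constant is uniform in $z$, so it factors out of the integral, and (ii) the $\mathcal L^\infty$ bound on the basis functions is uniform in the index $i$, which is exactly what makes the resulting Lipschitz constant for $\beta_i$ the same for every $i$ --- the uniformity that Assumption~\ref{assump-regression} requires. The interchange of absolute value and integral is justified since $f(\cdot|\x)\in\mathcal L^2$ in $z$ (hence $\mathcal L^1$ on the bounded domain) and each $\phi_i$ is bounded. The same reasoning applies verbatim to any uniformly bounded orthonormal basis, with $\sqrt 2$ replaced by $\sup_i\|\phi_i\|_\infty$ (or by $1$ via Cauchy--Schwarz).
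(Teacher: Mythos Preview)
Your proof is correct and follows essentially the same approach as the paper: subtract the integral representations, move the absolute value inside, apply the Lipschitz hypothesis pointwise in $z$, and then bound $\int_0^1 |\phi_i(z)|\,dz$ by $\sqrt{2}$ using the uniform $L^\infty$ bound on the Fourier basis. Your justification of the last step (via $\|\phi_i\|_\infty\le\sqrt{2}$ on $[0,1]$) is in fact cleaner than the paper's, and your remark that Cauchy--Schwarz would give the sharper constant $1$ is a nice addition.
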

	
	\begin{proof}
		Let $\x,\y \in \Re^D$. Then  
		\begin{align*}
		|\beta_i(\x)-\beta_i(\y)| &= \left|\int \! \phi_i(z)f(z|\x)dz-\int \! \phi_i(z)f(z|\y)dz \right|  \leq \int \! \left| \phi_i(z)\right| \left| f(z|\x)-f(z|\y)  \right|dz  \\
		&\leq  L ||\x-\y||  \int \! \left| \phi_i(z)\right|  dz \leq  \sqrt{2} L ||\x-\y||  \int \! \left| \phi_i(z)\right|^2 dz  \\
		& = \sqrt{2} L ||\x-\y||
		\end{align*}
		

	\end{proof}

	Local polynomial regression \citep{Bickel:Li:2007} and Rodeo \citep{lafferty2008rodeo} use the notion
	of bounded partial derivatives. Hence, we use the following result:

	\begin{Lemma} Let $\{\phi_i\}_i$ be the Fourier basis. If for every fixed $z \! \in \! \Re$, $f(z|\x)$ has 
		all partial derivatives of order $p$ bounded by $K$, then $\beta_i(\x)$ has 
		all partial derivatives of order $p$ bounded by $\sqrt{2}K$
		\label{lemma:boundedDeriv}

	\end{Lemma}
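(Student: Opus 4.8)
The plan is to imitate the proof of Lemma~\ref{lemma:lip}, replacing the Lipschitz estimate by differentiation under the integral sign. Recall that, since the response takes values in $[0,1]$, we may write $\beta_i(\x)=\int_0^1 \phi_i(z)\,f(z|\x)\,dz$. Fix a multi-index $\gamma$ with $|\gamma|=p$; the goal is to bound $\bigl|\partial^\gamma_\x \beta_i(\x)\bigr|$ by $\sqrt{2}\,K$ uniformly in $\x$ and in $i$.

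First I would justify the interchange $\partial^\gamma_\x \beta_i(\x)=\int_0^1 \phi_i(z)\,\partial^\gamma_\x f(z|\x)\,dz$. This is the standard Leibniz rule for differentiation under the integral sign, applied one coordinate at a time: the integrand and its pointwise $\x$-derivatives of order up to $p$ are dominated on the compact domain $z\in[0,1]$ by the integrable (constant) function $\sqrt{2}\,K$, since $|\phi_i(z)|\le\sqrt{2}$ for the Fourier basis and all partial derivatives of $f(z|\cdot)$ up to order $p$ are bounded by $K$ by hypothesis. A routine induction on the order of differentiation then yields the displayed identity. Once this is in hand, the estimate is immediate:
\[
\bigl|\partial^\gamma_\x \beta_i(\x)\bigr| \le \int_0^1 |\phi_i(z)|\,\bigl|\partial^\gamma_\x f(z|\x)\bigr|\,dz \le K\int_0^1 |\phi_i(z)|\,dz \le \sqrt{2}\,K\int_0^1 |\phi_i(z)|^2\,dz = \sqrt{2}\,K,
\]
where the penultimate inequality uses $|\phi_i(z)|\le\sqrt{2}$ (hence $|\phi_i|\le\sqrt{2}\,|\phi_i|^2$ pointwise), exactly as in the proof of Lemma~\ref{lemma:lip}, and the last equality is orthonormality, $\|\phi_i\|_{\mathcal{L}^2}^2=1$. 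Since $\x$, $i$ and $\gamma$ were arbitrary, every partial derivative of $\beta_i$ of order $p$ is bounded by $\sqrt{2}\,K$.

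The only subtle point — and the main obstacle — is the justification of differentiating under the integral sign, in particular ensuring the domination persists when several derivatives are stacked. This is handled cleanly because $z$ ranges over a bounded interval and the hypothesis supplies a \emph{uniform} bound $K$ on all partial derivatives up to order $p$, not merely those of order exactly $p$; no smoothness of $\phi_i$ is required since the differentiation is only in $\x$. I would also remark in passing that the same argument shows the statement holds for any uniformly bounded orthonormal basis, with $\sqrt{2}$ replaced by $\sup_i \|\phi_i\|_\infty$.
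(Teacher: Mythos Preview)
Your approach is exactly the paper's: interchange $\partial^\gamma_\x$ with the integral and then use the uniform bound $|\phi_i|\le\sqrt{2}$ on $[0,1]$. The paper's proof is even terser---it writes the interchange without comment and jumps directly to $\sqrt{2}K$---so your added justification of the Leibniz rule is a welcome improvement.

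There is, however, a false step in your chain of inequalities. From $|\phi_i(z)|\le\sqrt{2}$ you conclude ``hence $|\phi_i|\le\sqrt{2}\,|\phi_i|^2$ pointwise,'' but this implication goes the wrong way: $|\phi_i|\le\sqrt{2}$ gives $|\phi_i|^2\le\sqrt{2}\,|\phi_i|$, not the reverse, and indeed $|\phi_i|\le\sqrt{2}\,|\phi_i|^2$ fails wherever $\phi_i$ vanishes. The integrated inequality $\int_0^1|\phi_i|\le\sqrt{2}\int_0^1|\phi_i|^2$ that you actually need is nonetheless true, but for the simpler reason $\int_0^1|\phi_i(z)|\,dz\le\|\phi_i\|_\infty\cdot 1\le\sqrt{2}$, while the right-hand side equals $\sqrt{2}$ by orthonormality. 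Just replace the detour through $|\phi_i|^2$ by this direct bound (as, in effect, the paper does) and the argument is clean.

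A smaller point: the lemma as stated bounds only derivatives \emph{of order $p$}, whereas your domination argument for the iterated Leibniz rule invokes bounds on all orders up to $p$. In the paper's intended applications (E2, E3) the hypothesis is indeed ``up to order $p$,'' so your reading is the right one in context; you may want to flag this as a tacit strengthening of the hypothesis rather than assert it is already contained in the statement.
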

	
	\begin{proof}
		Let $\x \! \in \! \Re^D$ and $a_1,\ldots,a_p \in \{1,2,\ldots,D\}$. Then
		\begin{align*}
		\left| \frac{\partial}{\partial x_{a_1}\ldots \partial x_{a_p}} \beta_i(\x) \right| &= \left|\frac{\partial}{\partial x_{a_1}\ldots \partial x_{a_p}}  \int\phi_i(z)f(z|\x)dz \right|  \leq \int\left| \phi_i(z)\right| \left| \frac{\partial}{\partial x_{a_1}\ldots \partial x_{a_p}} f(z|\x)   \right|dz  \\
		&\leq \sqrt{2} K 
		\end{align*}
	\end{proof}

	The notion of smoothness in \citet{bertin2008selection} is based on H\"olderian classes. Hence:

	\begin{Lemma} Let $\{\phi_i\}_i$ be the Fourier basis
		and $\mathcal{P}_l(f)(\cdot ,\x)$ be Taylor polynomial of order $l$ associated with $f$ at
		the point $\x$.
		If, for every fixed $z \! \in \! \Re$, 
		$f_z(\x):=f(z|\x)$ belongs to $\Sigma(\alpha,L)$, the $\alpha$-H\"olderian class, i.e., 
		$|f_z(\x)-\mathcal{P}_l(f_z)(\t ,\x)|\leq L ||\t-\x||_1^\alpha$ where $l =  \lfloor  \alpha \rfloor$,
		then $\beta_i(\x)$ belongs to $\Sigma(\alpha,\sqrt{2}L)$ for all $i \in \mathbb{N}$.
		
		\label{lemma:holder}
	\end{Lemma}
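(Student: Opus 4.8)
The plan is to follow the template of the proofs of Lemmas~\ref{lemma:lip} and~\ref{lemma:boundedDeriv}: write $\beta_i$ as an integral of $f_z:=f(z|\cdot)$ against $\phi_i$, transfer the Taylor--H\"older structure from $f_z$ to $\beta_i$ by passing the $z$-integral through the Taylor polynomial, and then control the leftover $\int|\phi_i(z)|\,dz$ using $|\phi_i|\le\sqrt2$ together with orthonormality of the Fourier basis.

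First I would recall from Eq.~\eqref{eq::coeff} that $\beta_i(\x)=\int_0^1\phi_i(z)f_z(\x)\,dz$. Since $f_z\in\Sigma(\alpha,L)$ with $l=\lfloor\alpha\rfloor$ entails that $\x\mapsto f_z(\x)$ is $l$ times differentiable and (under the mild domination discussed below) its partials up to order $l$ are $z$-integrable on $[0,1]$, one may differentiate under the integral sign: for each multi-index of order $\le l$, the corresponding partial of $\beta_i$ at $\x$ equals the integral against $\phi_i$ of that partial of $f_z$ at $\x$. Hence the order-$l$ Taylor polynomial of $\beta_i$ at $\x$ is $\mathcal{P}_l(\beta_i)(\t,\x)=\int_0^1\phi_i(z)\,\mathcal{P}_l(f_z)(\t,\x)\,dz$, and therefore, for $\t,\x\in\Re^D$,
\begin{align*}
\left|\beta_i(\t)-\mathcal{P}_l(\beta_i)(\t,\x)\right|
&=\left|\int_0^1\phi_i(z)\bigl(f_z(\t)-\mathcal{P}_l(f_z)(\t,\x)\bigr)dz\right|
\le\int_0^1|\phi_i(z)|\,\bigl|f_z(\t)-\mathcal{P}_l(f_z)(\t,\x)\bigr|\,dz\\
&\le L\,||\t-\x||_1^\alpha\int_0^1|\phi_i(z)|\,dz
\le\sqrt2\,L\,||\t-\x||_1^\alpha,
\end{align*}
the last inequality because $\int_0^1|\phi_i(z)|\,dz\le\sqrt2=\sqrt2\int_0^1|\phi_i(z)|^2\,dz$ (boundedness $|\phi_i|\le\sqrt2$ on $[0,1]$ and orthonormality). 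This yields $\beta_i\in\Sigma(\alpha,\sqrt2L)$ for every $i\in\mathbb N$.

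The step requiring genuine care --- and the one I would treat as the main obstacle --- is the identity $\mathcal{P}_l(\beta_i)(\t,\x)=\int_0^1\phi_i(z)\,\mathcal{P}_l(f_z)(\t,\x)\,dz$, i.e.\ the commutation of the order-$l$ partial derivatives in $\x$ with the $z$-integral. This calls for a dominated-convergence argument: the partials of $\x\mapsto f_z(\x)$ up to order $l$ must exist for a.e.\ $z$ and be dominated, locally uniformly in $\x$, by a $z$-integrable function on the compact domain $[0,1]$ (a condition that is automatic in the usual setting where $f(z|\x)$ is a smooth conditional density, and is in the spirit of the boundedness hypotheses used in Lemma~\ref{lemma:boundedDeriv}). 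Granted that, the rest is just the triangle inequality and Cauchy--Schwarz, exactly as in the proofs of Lemmas~\ref{lemma:lip}--\ref{lemma:boundedDeriv}.
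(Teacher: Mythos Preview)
Your proof is correct and follows essentially the same route as the paper: both obtain $\mathcal{P}_l(\beta_i)(\t,\x)=\int\phi_i(z)\,\mathcal{P}_l(f_z)(\t,\x)\,dz$ by linearity, apply the H\"older bound pointwise in $z$, and finish with $\int_0^1|\phi_i|\le\sqrt2$. The only difference is that you flag and justify the differentiation-under-the-integral step needed for the Taylor-polynomial identity, whereas the paper simply asserts it.
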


	\begin{proof}
		Because $\beta_i(\x)=\int\phi_i(z)f(z|\x)dz,$ then $\mathcal{P}_l(\beta_i)(\t ,\x)=\int\phi_i(z)\mathcal{P}_l(f_z)(\t,\x) dz $.  Hence, we have that
		\begin{align*}
		|\beta_i(\x)-\mathcal{P}_l(\beta_i)(\t ,\x)| \leq \int| \phi_i(z)| \ |f(z|\x)-\mathcal{P}_l(f_z)(\t,\x)| dz \leq 
		\sqrt{2} L ||\t-\x||_1^\alpha
		\end{align*}
	\end{proof}

	The spectral series estimator \citep{LeeIzbickiReg}
	assumes that the regression function
	is smooth with respect to $P$. Hence:
	
	\begin{Lemma} Let $\{\phi_i\}_i$ be the Fourier basis
		and assume  that,
		for every fixed $z \! \in \! \Re$, $\int ||\nabla f(z|\x)||^2dS(\x)  <\infty$.
		Then, for all $i \in \mathbb{N}$, 
		$\int ||\nabla \beta_i(\x)||^2dS(\x) < \infty$.
		\label{lemma:wrtP}
	\end{Lemma}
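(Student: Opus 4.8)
The plan is to follow the same template as in Lemmas~\ref{lemma:lip}--\ref{lemma:holder}: write the varying coefficient as $\beta_i(\x)=\int \phi_i(z)f(z|\x)\,dz$, differentiate under the integral sign, and then transfer the smoothness bound from $f(z|\cdot)$ to $\beta_i$ via a Cauchy--Schwarz (equivalently, Bessel) argument that uses the orthonormality of $\{\phi_i\}_i$ on $[0,1]$.

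First I would justify differentiating under the integral sign, so that for each coordinate $j$,
\[
\partial_j \beta_i(\x) = \int \phi_i(z)\,\partial_j f(z|\x)\,dz = \big\langle \phi_i,\ \partial_j f(\cdot|\x)\big\rangle_{\mathcal{L}^2([0,1])} .
\]
Then, for fixed $\x$, applying Cauchy--Schwarz and using $\int_0^1 \phi_i(z)^2\,dz = 1$ gives $\big(\partial_j\beta_i(\x)\big)^2 \leq \int_0^1 \big(\partial_j f(z|\x)\big)^2\,dz$ for every $i$ (the same bound follows from Bessel's inequality). Summing over $j=1,\dots,D$ yields the pointwise estimate
\[
\|\nabla\beta_i(\x)\|^2 \leq \int_0^1 \|\nabla f(z|\x)\|^2\,dz .
\]
Finally I would integrate this inequality against $S$ and interchange the order of integration (Tonelli, since the integrand is nonnegative):
\[
\int \|\nabla \beta_i(\x)\|^2\, dS(\x) \leq \int_0^1 \left(\int \|\nabla f(z|\x)\|^2\, dS(\x)\right) dz ,
\]
and the right-hand side is finite because the inner integral is finite for each $z$ by hypothesis and $z$ ranges over the bounded interval $[0,1]$ (using, if needed, the mild additional regularity that this inner integral is bounded, or at least integrable, in $z$, which holds in the smoothness classes under consideration).

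I expect the main obstacle to be the two measure-theoretic justifications rather than the algebra: (i) verifying that one may differentiate $\int \phi_i(z) f(z|\x)\,dz$ under the integral sign, which requires a local integrable dominating bound on $z\mapsto \phi_i(z)\,\partial_j f(z|\x)$ near each $\x$ (available from Assumption~\ref{assump-sobolevZ} together with the assumed smoothness of $f(z|\cdot)$ and the boundedness of $\phi_i$); and (ii) passing from the per-$z$ finiteness stated in the hypothesis to finiteness of the $z$-integral of $\int\|\nabla f(z|\x)\|^2\, dS(\x)$ --- since $z\in[0,1]$ this is harmless under the usual boundedness assumptions, but it should be made explicit so that the Tonelli step actually closes the argument.
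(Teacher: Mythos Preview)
Your proposal is correct and follows essentially the same approach as the paper: differentiate $\beta_i(\x)=\int\phi_i(z)f(z|\x)\,dz$ under the integral sign, apply Cauchy--Schwarz (using $\int\phi_i^2=1$) to obtain $\|\nabla\beta_i(\x)\|^2\le \int_0^1\|\nabla f(z|\x)\|^2\,dz$, and then integrate against $S$ and swap the order of integration. The only differences are cosmetic --- you work coordinate-wise and are more explicit about the measure-theoretic justifications (differentiation under the integral, Tonelli, and the passage from per-$z$ finiteness to $z$-integrability), whereas the paper compresses these into a short chain of displayed inequalities.
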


	\begin{proof}
		Because $\beta_i(\x)=\int\phi_i(z)f(z|\x)dz,$ then  
		\begin{align*}
		\int ||\nabla \beta_i(\x)||^2dS(\x)  &=
		\int \left\|\nabla \int \! \phi_i(z)f(z|\x)dz\right\|^2dS(\x) =
		\int \left\|\int\! \phi_i(z)\nabla f(z|\x)dz\right\|^2dS(\x)  \\
		&\leq \int \left( \int \phi_i^2(z) dz \right) \left(\int\! || \nabla f(z|\x)||^2dz \right) dS(\x) \\
		&= \int \left(\int\! || \nabla f(z|\x)||^2 dS(\x) 
		\right)dz < \infty
		\end{align*}
	\end{proof}
	
	Finally, the local linear functional regression estimator \citep{baillo2009local}
	assumes that the regression function
	has continuous second derivatives. Hence:

	\begin{Lemma} Let $\{\phi_i\}_i$ be the Fourier basis
		and assume  that $\x \in \mathcal{L}^2([0,1])$ and that,
		for every fixed $z \! \in \! \Re$, $f(z|\x)$ has continuous second derivative. Then 
		$\beta_i(\x)$  also has continuous second derivative for every $i\in \mathbb{N}$.
		\label{lemma:continuousSecond}
	\end{Lemma}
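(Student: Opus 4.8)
The plan is to follow the same template as Lemmas~\ref{lemma:lip}--\ref{lemma:wrtP}: write $\beta_i(\x)=\int_\Re \phi_i(z) f(z|\x)\,dz$ and push the derivatives in $\x$ through the integral in $z$. Since $\x$ now ranges over the function space $\mathcal{L}^2([0,1])$, ``second derivative'' should be read as the continuous second Fréchet derivative $D^2\beta_i(\x)$, a symmetric bounded bilinear form on $\mathcal{L}^2([0,1])$; the claim to be proved is that $\x\mapsto D^2\beta_i(\x)$ exists and is continuous from $\mathcal{L}^2([0,1])$ into the Banach space of such bilinear forms.

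First I would justify differentiating under the integral sign. Because the response satisfies $Z\in[0,1]$, the density $f(\cdot|\x)$ is supported on $[0,1]$, so $\int_\Re \phi_i(z) f(z|\x)\,dz$ is an integral over the compact set $[0,1]$; as $|\phi_i(z)|\le\sqrt2$ for the Fourier basis and $z\mapsto D^2 f(z|\x)$ is continuous, hence bounded on $[0,1]$ and locally uniformly bounded in $\x$, the hypotheses of the standard differentiation-under-the-integral theorem for Fréchet derivatives are met. Applying it once for the first derivative and once more for the second yields existence of $D^2\beta_i$ together with the formula
\[
D^2\beta_i(\x)[h,h'] \;=\; \int_0^1 \phi_i(z)\, D^2 f(z|\x)[h,h']\,dz, \qquad h,h'\in\mathcal{L}^2([0,1]).
\]

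Next I would establish continuity. Fix $\x$ and let $\x_n\to\x$ in $\mathcal{L}^2([0,1])$. For the Fourier basis,
\[
\bigl\|D^2\beta_i(\x_n)-D^2\beta_i(\x)\bigr\|_{\mathrm{op}} \;\le\; \sqrt2 \int_0^1 \bigl\|D^2 f(z|\x_n)-D^2 f(z|\x)\bigr\|_{\mathrm{op}}\,dz,
\]
where $\|\cdot\|_{\mathrm{op}}$ is the bilinear-form operator norm. Continuity of the second derivative of $f(z|\x)$ in $\x$ drives the integrand to $0$ pointwise in $z$; on the compact $z$-interval it is dominated by the finite constant $2\sup_{z,n}\|D^2 f(z|\x_n)\|_{\mathrm{op}}$ (finite because $\{\x_n\}$ is bounded and $D^2 f$ is continuous), so dominated convergence gives $D^2\beta_i(\x_n)\to D^2\beta_i(\x)$. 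Hence $\beta_i$ has a continuous second derivative, which is what Assumption~\ref{assump-regression} needs for example~E6 via \citet{baillo2009local}.

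The main obstacle I expect is not the estimate itself---which, as above, is as short as in the earlier lemmas---but the bookkeeping of the differentiation-under-the-integral step in the infinite-dimensional setting: pinning down precisely the sense in which ``$f(z|\x)$ has a continuous second derivative'' (Fréchet versus Gâteaux/directional), verifying the uniform-domination hypotheses that license interchanging $D^2$ with $\int\,dz$, and checking that the uniform bound $|\phi_i|\le\sqrt2$ together with the compact support of $f(\cdot|\x)$ coming from $Z\in[0,1]$ suffices to control the bilinear-form norms uniformly on bounded subsets of $\mathcal{L}^2([0,1])$.
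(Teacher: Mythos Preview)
Your approach is essentially the same as the paper's: write $\beta_i(\x)=\int \phi_i(z)f(z|\x)\,dz$ and differentiate under the integral sign to obtain $D^2\beta_i(\x)=\int \phi_i(z)\,D^2 f(z|\x)\,dz$. The paper's proof consists of exactly that one displayed formula with no further justification, so your version is considerably more detailed than what the authors actually provide; the additional care you take with Fr\'echet derivatives, domination, and continuity is not in the paper but is the natural fleshing-out of their sketch.
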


	\begin{proof}
		Because $\beta_i(\x)=\int\phi_i(z)f(z|\x)dz,$ then  
		\begin{align*}
		\frac{d^2 \beta_i(\x)}{d\x^2} = 	\int\phi_i(z) \frac{d^2 f(z|\x)}{d\x^2} dz
		\end{align*}
	\end{proof}
	
	We now present the proofs of the other results presented
	in the paper.
	
	\subsection{Proof of Lemma 1}

	\begin{proof}
		Because $f(z|\x)$ belongs to $W_{\phi}(s_\vec{x},c_\vec{x})$ for all $z$, 
		and $f(z|\x)=\sum_{i \geq 1} \beta_i(\x)\phi_i(z)$,
		we have that
		\begin{align*}
		\sum_{i \geq I} I^{2s_\vec{x}}\left(\beta_i(\x)\right)^2 \leq \sum_{i \geq I} i^{2s_\vec{x}}\left(\beta_i(\x)\right)^2 \leq  c_\vec{x}^2.
		\end{align*}
		Hence
		\begin{align*}
		\sum_{i \geq I} \int \left(\beta_i(\x)\right)^2 d\vec{x}\leq \int \frac{ c_\vec{x}^2}{I^{2s_\vec{x}}}d\vec{x}
		= O(I^{-2 \beta}).
		\end{align*}
		
	\end{proof}

	\subsection{Proof of Theorem 1:}

	\begin{proof}
		\begin{align*}
		&\iint\left(\widehat{f}_I(z|\x)-f(z|\x)\right)^2dzd\x=\\  
		&\iint\left(\sum_{i=1}^I\widehat{\beta}_{i}(\x)\phi_i(z)-\sum_{i\geq 1} \beta_{i}(\x)\phi_i(z)\right)^2dzd\x=\\
		&\iint\left(\sum_{i=1}^I( \widehat{\beta}_{i}(\x) -\beta_i(\x))\phi_i(z)-\sum_{i>I} \beta_{i}(\x)\phi_i(z)\right)^2dzd\x \stackrel{(*)}{=}\\
		&\int \left( \sum_{i=1}^I ( \widehat{\beta}_{i}(\x) -\beta_i(\x))^2 +  \sum_{i>I} (\beta_{i}(\x))^2 \right) d\x =\\
		& \sum_{i=1}^I \int ( \widehat{\beta}_{i}(\x) -\beta_i(\x))^2 d\x+  \sum_{i>I} \int (\beta_{i}(\x))^2  d\x, 
		\end{align*}
		where step $(*)$ follows from expanding the square and the fact that the Fourier basis is orthonormal 
		(i.e., the cross products in the expansion are zero).
		
		The final result follows from Lemmas~\ref{biasBound} and~\ref{varianceBound}.
	\end{proof}

\end{document}